\title{A communication-efficient, online changepoint detection method for monitoring distributed sensor networks}
\newcommand{\blue}{\color{blue}}
\numberwithin{equation}{section}
\DeclareMathAlphabet{\mathpzc}{OT1}{pzc}{m}{it}
\renewcommand{\mkbegdispquote}[2]{\itshape}
\newtheorem{Thm}{Theorem}
\newtheorem{Assum}{Assumption}
\newtheorem{corollary}{Corollary}[Thm]
\newcommand*\ab{.4}
\tikzset{
  net node/.style = {circle, minimum width=2*\ab cm, inner sep=0pt, outer sep=0pt, draw=black, fill=white},
  net sqnode/.style = {square, minimum width=2*\ab cm, inner sep=0pt, outer sep=0pt, draw=black, fill=white},
  net root/.style = {cloud, cloud puffs=15.7, cloud ignores aspect, minimum width=10*\ab cm, minimum height=4*\ab cm, align= center, draw=black, line width =1pt},
  net connect/.style = {line width=1pt, draw=black},
}
\author[1]{Ziyang Yang}
\author[2]{Idris A. Eckley}
\author[3]{Paul Fearnhead}
\affil[1]{STOR-i Doctoral Training Centre, Lancaster University}
\affil[2,3]{Department of Mathematics and Statistics, Lancaster University}
\begin{document}
\bibliographystyle{apalike}  
\maketitle
\begin{abstract}
We consider the challenge of efficiently detecting changes within a network of sensors, where we also need to minimise communication between sensors and the cloud. We propose an online, communication-efficient method to detect such changes. The procedure works by performing likelihood ratio tests at each time point, and two thresholds are chosen to filter unimportant test statistics and make decisions based on the aggregated test statistics respectively. We provide asymptotic theory concerning consistency and the asymptotic distribution if there are no changes. Simulation results suggest that our method can achieve similar performance to the idealised setting, where we have no constraints on communication between sensors, but substantially reduce the transmission costs.

\textbf{Keywords: Changepoints, MOSUM, Online, Real-time analysis, Internet of Things, Distributed computing.}
\end{abstract}
\section{Introduction}
\label{sec:intro}

During the last decade, there has been a significant focus on the important challenge of efficient and accurate detection of changes in both univariate and multivariate data sequences \cite[][]{chofryz2015, fisch2018linear, kovacs2023seeded,truong2020selective, tveten2022scalable, wang2018high}. More recently, focus has turned to translating the efficiency of such approaches to the online setting, typically motivated by an applied challenge such as how to deal with limited computational power \cite[e.g.][]{ward2023}. Recent major contributions to the online setting include \cite{adams2007bayesian}, \cite{tartakovsky2014sequential}, \cite{yu2020note}, \cite{chen2020highdimensional}, and \cite{romano2023fast}. In this paper we consider a less studied scenario, monitoring edge-behaviour within  distributed sensor networks, which are common architectures within the Internet of Things framework (IoT). The importance of efficiently detecting changes at the edge efficiently, whilst minimising communication between sensors and the cloud is perhaps best appreciated by considering two key applications: detecting cyber-attacks on smart cities \citep{8666450} and optimising the performance of base stations \citep{Wu2019EdgeCI}.\\

Consider, by way of example, Figure \ref{fig:schematic} which shows a schematic representation of real-time monitoring within a distributed network. Here we assume that $d$ data streams are monitored, each by its own sensor. Communication between the sensors and the centre is possible as shown by the dashed lines. An unusual event happens at time $\tau$, and we want to detect this event as quickly as possible. However, in modern sensor networks that deploy IoT devices the computational resources of the sensors can be substantial. Moreover, communication between the sensors and the cloud can be problematic due to the heavy energy usage involved with transmitting data \citep{7796149,10.1145/3154384}. As such, we need algorithms that can identify the time when it is important for information to be shared with the cloud.
%
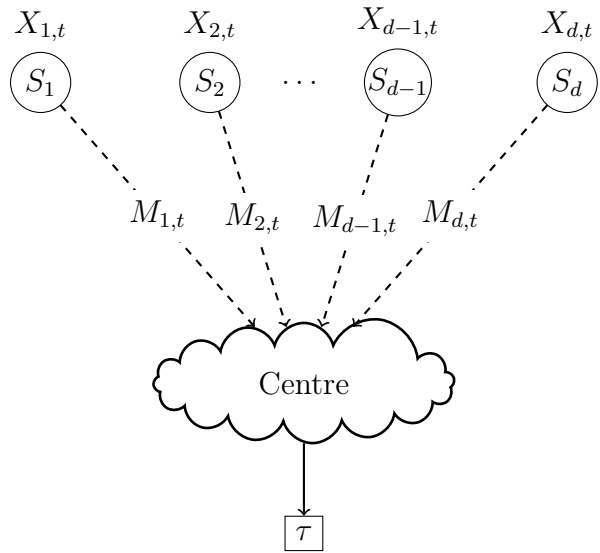
\begin{figure}[h]
  \centering
  \begin{tikzpicture}
    \node (root) [net root] (centre)  at (0,0) {Centre};
    \node[label=above:{$X_{1,t}$}] [net node] (node) (s_1) at (-3.5,4) {$S_1$};
    \node[label=above:{$X_{2,t}$}] (node) [net node] (s_2) at (-1.25,4) {$S_2$};
    \node[label=above:{$X_{d-1,t}$}] (node) [net node] (s_3) at (1.25,4) {$S_{d-1}$};
    \node[label=above:{$X_{d,t}$}] (node) [net node] (s_4) at (3.5,4) {$S_{d}$};
    \path (s_2) -- node[auto=false]{\ldots} (s_3);
    \draw[->, thick, dashed] (s_1) -- ++(centre) node [midway,fill=white] {$M_{1,t}$};
    \draw[->, thick, dashed] (s_2) -- ++(centre) node [midway,fill=white] {$M_{2,t}$};
    \draw[->, thick, dashed] (s_3) -- ++(centre) node [midway,fill=white] {$M_{d-1,t}$};
    \draw[->, thick, dashed] (s_4) -- ++(centre) node [midway,fill=white] {$M_{d,t}$};
    \node (tau) at (0,-2) [draw, rectangle] {$\tau$};
    \draw[->, thick] (centre) -- ++(tau);        
  \end{tikzpicture}
  \caption{Schematic representation of a sensor network made up of $d$ sensors, where $S_i$ is the index for sensor $i$, $X_{i,t}$ is the data observed at sensor $i$, and $M_{i,t}$ is the message transmitted from sensor $i$ to centre at time $t$.}
  \label{fig:schematic}    
\end{figure}
\noindent More specifically, in this article, we seek to develop a new method to detect changes within such a network in real time with high statistical power and as little communication and computation as possible. \\

Changepoint methods which can be applied in the fully centralised problem, when the data from the sensors is processed and transmitted to the centre (cloud) at every time step, are well studied. Approaches typically seek to calculate the maximum or the sum of all the test statistics \cite[see, e.g., ][]{10.1093/biomet/asq010, xie2013, chan2017, chen2020highdimensional, gosmann2020sequential}. The rationale behind these methods is to set thresholds and raise the alarm if the aggregated test statistics from multiple streams exceed pre-defined thresholds. Numerical experiments \cite[][]{10.1093/biomet/asq010} indicate that taking the maximum is the optimal method when there are only a few affected data streams - what we will term a sparse change. Conversely, taking the sum is optimal when most data streams are affected, also known as a dense change. \\


Recent contributions to this distributed problem include \citep{489500, 923755, 1413462, 1459066, 6034390, tartakovsky2006performance, 7105922}. Among them, two recent papers of particular interest develop \textit{communication efficient} schemes for monitoring a large number of data streams \citep{doi:10.1080/07474946.2018.1548849,liu2016scalable}. The key idea is that each sensor computes a local monitoring statistic and then employs a thresholding step, only sending the statistic to the centre if there is some evidence of a change. The information from multiple sensors is then combined at the centre. This approach reduces unnecessary transmission by ignoring streams with little evidence for a change, while only focusing on data streams that show signs of change. \\

Although computationally feasible, existing works assume that the pre- and post-change mean are known. In practice, the pre-change mean can be estimated based on historical data. However assuming a known post-change mean is typically unrealistic in practice, with an incorrect value potentially leading to a failure to detect, or poor detection power. Liu et al (2019) approximate the post-change mean recursively but, as a consequence, somewhat sacrifice statistical power of the algorithm. \\


Our approach builds on recent work developing the moving sum (MOSUM) as a window-based changepoint methods \citep[see, e.g.,][]{AUE20122271, kirch2015, kirch2018}.  Specifically, we propose an online communication-efficient changepoint detection algorithm (distributed MOSUM) to detect changes in real-time within the distributed network setting. A local threshold is chosen to filter out unimportant information and only transmit the statistically important test statistic to the centre. The change will be alarmed when the aggregated test statistic exceeds the pre-defined global threshold in the central cloud. 
The low time complexity and communication efficient scheme of our proposed method makes it suitable for online monitoring. We also establish that the proposed method can achieve similar statistical power as the idealistic setting, where there is no communication constraint, at detecting large changes whilst substantially reducing the transmission cost. Moreover, we also show how to make the detection performance of distributed MOSUM close to that of the idealised setting by increasing the window size, which will only sacrifice the storage cost and a little transmission cost.\\

The key differences between our work and previous distributed changepoint detection contributions \citep[e.g.,][]{liu2016scalable} are: Firstly, a moving window-based test statistic MOSUM is chosen to avoid the requirement of knowledge of the post-change mean. Secondly, earlier works have been based on the framework that controls the average run length (ARL) - the average amount of time until incorrectly detect a change. However, such a metric gives a somewhat limited amount of information since the distribution of run length is usually unknown. For instance, if multiple procedures end quickly while a few replications stop significantly longer, the ARL would be the same if all the replications terminated around the same time. Conversely, in this work, we present methods in terms of controlling the error rate under the null at a specific level, and with asymptotic power 1 under alternatives. Furthermore, our ideas generalise trivially to methods controlling the average run length.\\

The structure of this paper is as follows. In Section \ref{sec:problem}, the problem setting is outlined, before introducing the distributed MOSUM methodology in Section \ref{sec:methodology}. Several theoretical results for this new approach are given in Section \ref{sec:theory}. Simulation studies are carried out in Section \ref{sec:sims}, before ending with some concluding remarks (Section \ref{conclusion}).
\section{Problem setting} \label{sec:problem}
We begin by assuming that we have $d$ sensors,
each of which is observed as follows: $\mathbf{X_t}=(X_{1,t},X_{2,t},X_{3,t}...,X_{d,t})$ at every time point $t \in \mathbb{N}$. Here $\mathbf{X_t}$ could be raw data or the residuals after pre-processing the data. These observations are assumed to be identically distributed and independent across series. Such assumptions are common in the problem of detecting changes within a distributed system setting \citep{1021129,  10.1093/biomet/asq010,  xie2013, liu2016scalable}. We do not strictly assume time independence here, but our method is optimal when this assumption holds. Moreover, the impact of time dependence will be numerically studied in Section \ref{sec:sim-ind}. \\

We begin by assuming that at some unknown time, $\tau$, the distribution of some unknown subsets of $d$ sensors will change.  For simplicity, we only consider change in mean, but the ideas below are easily extended to other changepoint settings. Therefore, in this illustrative change in mean setting, the model for the data is expressed as follows:
\begin{align}
  \label{eq:model}
  X_{i,t} = \mu_i + \delta_{i} \mathbbm{1}_{ \{ t > \tau \}} + \epsilon_{i,t}, \quad t \in \mathbbm{N}, 1 \leq i \leq d,
\end{align}
where $\mu_i$ is the known pre-change mean, $\delta_i$ is the mean shift, and $\{\epsilon_{i,t}: t \in \mathbbm{N}\}$ are strictly stationary error sequences. After time $\tau$, the mean of the $i$-th data stream changes immediately from $\mu_i$ to $\mu_i + \delta_i$. Here it is useful to note that our setting also permits some $\delta_i=0$, which means that only a subset of data streams are affected by the change. Without loss of generality, we assume $\mu_i=0$. Under the null hypothesis, the model for the data can be rewritten as
\begin{align}
  \label{eq:model_eq}
  X_{i,t} = \epsilon_{i,t}, \quad t \in \mathbbm{N}, 1 \leq i \leq d.
\end{align}
Moreover under the alternative hypothesis, the model is $X_{i,t} = \delta_i+\epsilon_{i,t}, t \in \mathbbm{N}, 1 \leq i \leq d$. Our aim is to monitor such a system and raise the alarm as soon as possible following the event at time $\tau$. One way of achieving this is to perform hypothesis testing sequentially, i.e., evaluate the null hypothesis of no change in mean at each time point $t \in \mathbb{N}$. The algorithm will stop and declare a change when we can reject the null hypothesis.\\

In the classical sequential changepoint detection problem, we evaluate the performance of an algorithm subject to a constraint on its false alarm rate. First, consider an open-ended stopping rule where the algorithm never we have an infinite time-window of measurements and the algorithm never halts until it detects a change. The false alarm rate can be evaluated in two ways. Assume there is no change, and let $\widehat{\tau}$ be the time at which we detect a change, with the convention that $\widehat{\tau}=\infty$ if we detect no change. One approach is to control the average run length, $E^{\infty}(\widehat{\tau})$, the expected time of to a false alarm. This makes sense for procedures with a constant threshold for detection, for which we are certain to detect a change under the Null if we monitor for an infinite time period. Alternatively, one can control the false alarm rate, $P^\infty(\widehat{\tau} <\infty)$, the probability of a false alarm. To control this over an infinite time horizon requires increasing the threshold for detecting a change over time. Equivalently, this can be achieved by multiplying the test statistic with a weight function  $w(\cdot)<1$. See \cite{10.2307/3533257, doi:10.1002/jae.776, doi:10.1080/07474940801989087, AUE20122271, kirch2015, Weber2017_1000068981, Yau_sinica, kirch2018, kengne2020inference} for examples of how to choose an appropriate weight function. \\

In our paper, we focus on controlling the false alarm rate. However \cite{AUE20122271} states that "applying open-ended procedures built from the asymptotic critical values have a tendency to be too conservative in ﬁnite samples". Therefore, our paper considers a close-ended stopping rule. In this approach, the algorithm will stop either upon detecting a change or upon reaching the predefined monitoring time $T$. We thus control the false alarm rate over a time time window of length $T$. However, the ideas we present can easily be adapted to the open-ended setting, and also to methods which control the average run length.\\

Under the context of distributed changepoint detection problem, we additionally evaluate the index - the average transmission cost $\bar \Delta$. This is the average number of transmissions at each time step for $d$ sensors, and should be smaller than the pre-specified transmission cost $\Delta$. \\

Before introducing our proposed method, we first review relevant work. At time $t$, the local monitoring statistic, $\mathcal{T}_{i}$ is calculated for the $i$th stream. Then all the local statistics $\mathcal{T}_{i}$ can be combined into a global monitoring statistic $\mathcal{T}$ at the fusion centre. There are two common choices of message combinations for monitoring changes within the distributed system. One of these two types, the SUM scheme \citep{10.1093/biomet/asq010}, declares a change when the sum of all the local monitoring statistics exceeds a pre-defined threshold, that is: 
\begin{align*}
    \hat \tau_{\mathrm{sum}}(c_{\mathrm{Global}})=\inf\left\{t\geq 1: \mathcal{T} \geq c_{\mathrm{Global}} \right\}=\inf\left\{t\geq 1: \sum_{i=1}^d \mathcal{T}_{i} \geq c_{\mathrm{Global}} \right\},
\end{align*}
where $c_{\mathrm{Global}}$ is global threshold. This way of combining statistics across streams is known to be good if the series are independent and the changes are dense. However, implementing this method on the distributed system requires sending every $\mathcal{T}_{i}$ to the fusion centre, which is expensive. A sum-shrinkage method \citep{liu2016scalable} is proposed to reduce the communication cost by thresholding the test statistics before summing them:
\begin{align*}
   \hat \tau_{\mathrm{sum}}(c_{\mathrm{Local}}, c_{\mathrm{Global}})=\inf\left\{t\geq 1: \mathcal{T} \geq c_{\mathrm{Global}} \right\}=\inf\left\{t\geq 1: \sum_{i=1}^d \mathcal{T}_{i}\mathbbm{I}(\mathcal{T}_{i}\geq c_{\mathrm{Local}}) \geq c_{\mathrm{Global}} \right\}. 
\end{align*}
 Empirically the sum-shrinkage method could achieve similar performance as the SUM scheme in the dense case and surprisingly performs better in the sparse case. \\

When the change is sparse, it has been shown both theoretically and empirically \citep{10.1093/biomet/asq010, liu2016scalable,chen2020highdimensional} that monitoring the maximum of the test-statistics across series is best. In such a setting, the MAX procedure \citep{1021129} monitors the maximum of test statistics and raises the alarm when the maximum of the local test statistics exceeds the thresholds, that is: 
\begin{align*}
    \hat \tau_{\mathrm{max}}(c_{\mathrm{Global}})=\inf\left\{t\geq 1: \mathcal{T}\geq c_{\mathrm{Global}} \right\}=\inf\left\{t\geq 1: \max_{ 1\leq i\leq d} \mathcal{T}_{i} \geq  c_{\mathrm{Global}} \right\}.
\end{align*}
The best choice of different schemes depends on the sparsity of changes which is based on the number of affected data streams $p$. This can be made precise if we consider an asymptotic setting where $p\rightarrow\infty$ 
\citep{enikeeva2019high}, and define a change to be sparse if the number of affected streams is $p=o(\sqrt{d})$, and it to be a dense change otherwise.  
A recent paper \citep{chen2020highdimensional} combines both SUM procedure and MAX procedure to achieve good performance regardless of the sparsity. In the context of distributed monitoring, the MAX procedure is trivially implemented without any communication. Specifically, each sensor has the threshold for the max-statistic and flags a change if their local statistic is above this threshold. Therefore, within this paper, we only focus on developing a communication-efficient version of the SUM scheme. Our aim is a method that performs well for dense changes, but limits the communication cost.  We will use the SUM scheme as the ideal method to compare against since it has no restrictions on communication.

\section{Distributed change point detection method}
\label{sec:methodology}
Our proposed methodology is summarized in Algorithm \ref{hard-thresholding}, and described in detail below. The method essentially comprises of three steps. The first step involves the parallel local monitoring of each data stream by the sensors. As the monitoring unfolds, messages are occasionally sent from the sensors to the centre to indicate the presence of a potential change. Finally, at the centre these messages are aggregated to find changes that occur across a number of data streams.
\begin{algorithm}[]
\footnotesize
\caption{Centralized and distributed MOSUM}\label{hard-thresholding}
\SetKwInOut{Input}{input}\SetKwInOut{Output}{output}
    \SetAlgoLined
    \Input{historic data $x_{i,t}$ for $i=1,2,...,d$, and $1\leq t\leq m$}
     \BlankLine
    {\color{purple}\emph{Estimating the baseline parameters}}\tcp*[f]{can be done offline}\\
    \For{$i=1$ \KwTo $d$}{estimate $\hat{\mu}_i$ and $\hat \sigma_i$}
     \BlankLine
    \KwData{$x_{i,t}$ for $i=1,2,...,d$ at time $t$}
    \BlankLine
    \While{change is detected or reached the maximum monitoring time $T$}{
    {\color{purple}\emph{Local monitoring}} \tcp*[f]{parallel computing}\\
    
    \For{$i=1$ \KwTo $d$}{ 
    $\mathcal{T}_{i}(m,k,h) = \frac{1}{\hat{\sigma}_i} \left|  \sum_{t = m + k - h + 1}^{m+k}\left( X_{i,t} - \hat{\mu}_i \right)\right|.$}

    {\color{purple}\emph{Message passing}}\\
    \If{$w(k,h)\mathcal{T}_{i}(m,k,h)>c_{\textrm{Local}}$}{
           $M_{i,t}=\mathcal{T}_{i}(m,k,h)$ \tcp*[f]{centralized scheme:set $c_{\textrm{Local}}=0$}\\
    \lElse{$M_{i,t}=0$}}

    {\color{purple}\emph{Global monitoring}}\\
    \If{$w(k,h)\sqrt{\sum_{i}^dM_{i,t}}>c_{\textrm{Global}}$} {
        stop the algorithm \\
        \Output{$\hat \tau =t$}}
    {$t \longleftarrow t+1$\\}
    }
\end{algorithm}
\subsection{Local monitoring}
\subsubsection{Estimating the baseline parameters}
Our sequential testing approach requires a historic data set of length $m$ to estimate the baseline parameters. Theoretical results are obtained later in the paper when $m \rightarrow \infty$. The parameters of interest are the mean of each data stream $\mu_i$ and the variance of the errors $\sigma_i^2$. For the $i$th data stream these estimates are,
\begin{align}
  \label{eq:initial_est}
  \begin{split}    
    \hat{\mu}_i &= \frac{1}{m} \sum_{t=1}^{m} X_{i,t}, \\
    \hat{\sigma}_i^2 &= \frac{1}{m}  \sum_{t=1}^{m} \left( X_{i,t} - \hat{\mu}_i \right)^2.    
  \end{split}
\end{align}
If the errors cannot be assumed to be independent we can estimate the long run variance. This requires specifying a kernel function $K(\cdot)$:
\begin{align}
\label{kernel}
  \hat{\sigma}_i^2 &= \frac{1}{m}  \sum_{t=1}^{m} \left( X_{i,t} - \hat{\mu}_i \right)^2 + 2\sum_{j=1}^{m-1} K\left( \frac{j}{l} \right)\hat{\gamma}_{j}^{(i)}, \\
  \textrm{where } \hat{\gamma}_{j}^{(i)} &= \frac{1}{m - j} \sum_{t=1}^{m-j} \left( X_{i,t} - \hat{\mu}_i \right) \left( X_{i,t+j} - \hat{\mu}_i \right).
\end{align}
In this setting, the Kernel function can be seen as a weighting function for sample covariance $\hat{\gamma}_{j}^{(i)}$. The kernel function must be symmetric and such that $K(0)=1$. Various kernel functions are proposed. Standard kernel functions include Truncated \citep{white1984nonlinear}, Bartlett \citep{newey1986simple} and Parzen \citep{gallant2009nonlinear} amongst others. Among them, the Bartlett kernel is frequently used in Econometrics. This kernel takes the form:
\begin{align*} K_{\textrm{Bartlett}}\left(\frac{j}{l}\right) =
    \begin{cases}
      1-\frac{j}{l}, & \text{for $0 \leq j \leq l-1$},\\
      0, & \text{otherwise.}
    \end{cases}  
\end{align*}
For more details, see \cite{doi:10.1111/j.1467-9892.2012.00796.x}; \cite{kiefer2002heteroskedasticity}; \cite{kiefer2002heteroskedasticityb}. 
\subsubsection{Starting local monitoring}
Once the baseline parameters have been estimated, beginning at time $m + 1$ data $X_{i,m+1}, X_{i,m+2}, \hdots$ are observed sequentially and monitored for a change. This is achieved using a MOSUM statistic which at monitoring time, $k$, takes a window containing the most recent $h$ observations:
\begin{align}
  \label{eq:local_MOSUM_method}
  \mathcal{T}_i(m,k,h) = \frac{1}{\hat{\sigma}_i} \left|  \sum_{t = m + k - h + 1}^{m+k}\left( X_{i,t} - \hat{\mu}_i \right)\right|.
\end{align}
Following \cite{AUE20122271}, the MOSUM statistic will declare a change at time $k$ when the weighted local MOSUM statistic $ w(k,h)\mathcal{T}_i(m,k,h)$ exceeds a pre-defined threshold.
A weight function $w(\cdot,\cdot)$ is introduced to control the asymptotic size of the detection procedure. Typically $w(\cdot,\cdot)$ depends on the monitoring time $k$, and the window size $h$,
\begin{align}
  \label{eq:weight_function}
  w(k,h) =  \frac{1}{\sqrt{h}} \rho\left( \frac{k}{h} \right),
\end{align}
for some appropriate $\rho(\cdot)$. The choice of the weight function controls the sensitivity of the test. A wide range of weight functions can be used as long as they are continuous functions that satisfy $\inf_{0\leq t \leq T} \rho(t)> 0$. In this paper, we use the weight function proposed in \cite{10.2307/3533257} and \cite{doi:10.1002/jae.776}:
\begin{align*}
  \rho(t) = \max( 1, \log \left( 1 + t \right))^{-1/2}. 
\end{align*}
Intuitively, if there is no change the weighted MOSUM will remain small, but it will be large if there is a change. Figure \ref{fig:example_MOSUM} gives the behavior of weighted MOSUM statistic under the null and the alternative assumptions for one data stream.
\begin{figure}[h!]
  \begin{minipage}{.5\linewidth}
    \centering
    \subfloat[]{\label{main:a}\includegraphics[width=\textwidth]{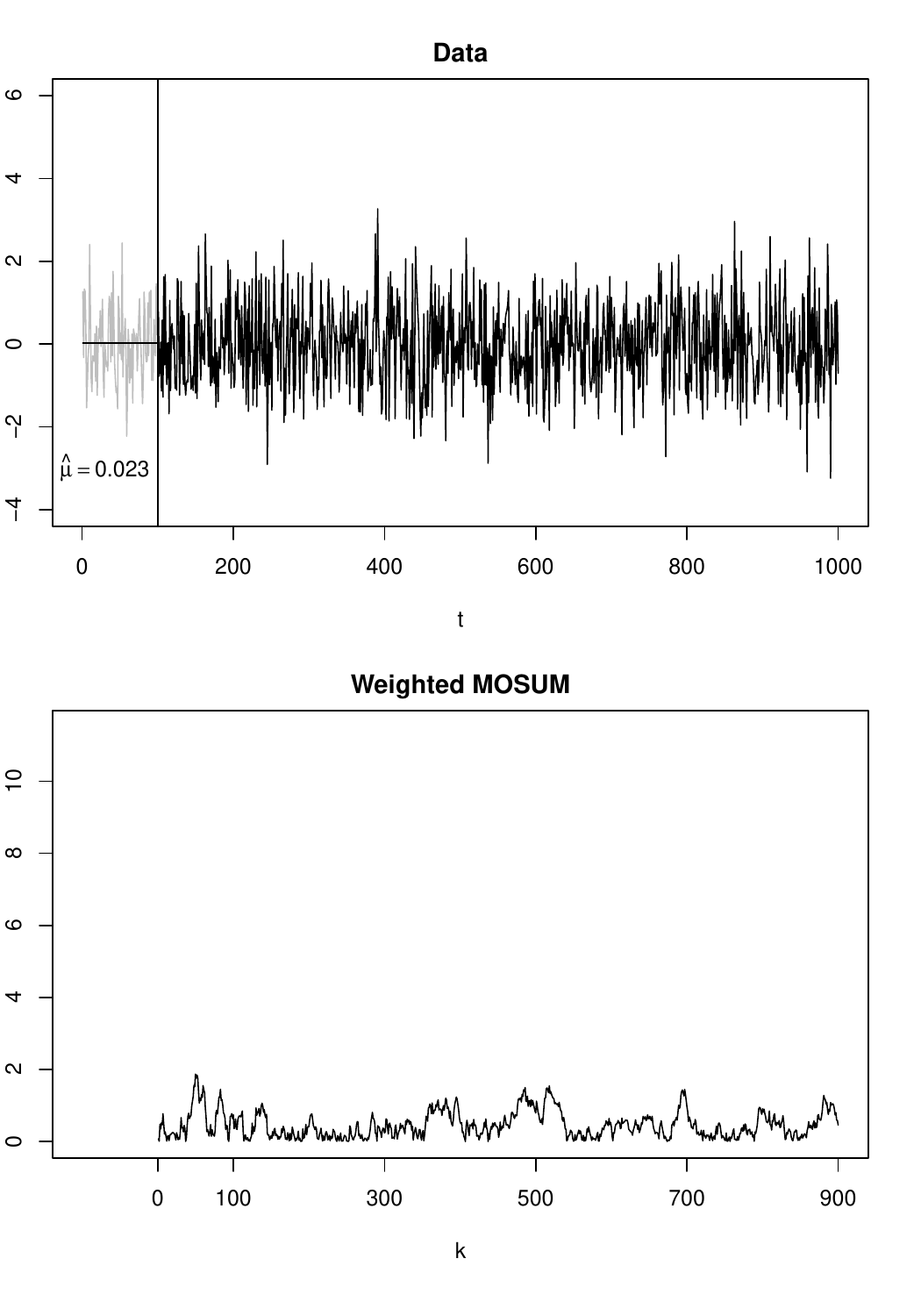}}
  \end{minipage}%
  \begin{minipage}{.5\linewidth}
    \centering
    \subfloat[]{\label{main:b}\includegraphics[width=\textwidth]{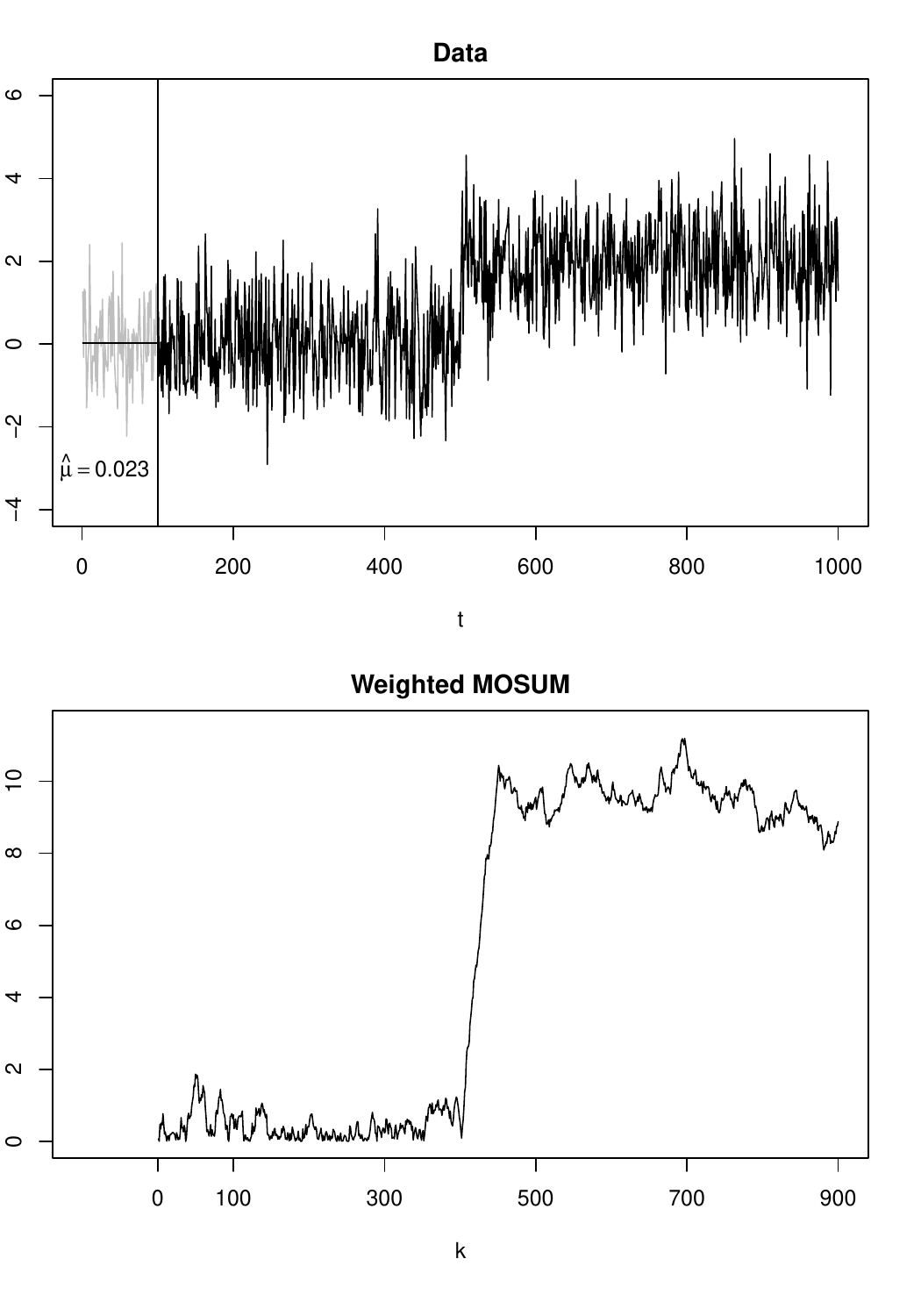}}
  \end{minipage}
  \caption{Example time series with no change (a) and a single change (b) in the top row. The bottom row shows the weighted MOSUM statistic with a historic period of length $m = 100$ and a window size of $h = 50$.}
  \label{fig:example_MOSUM}
\end{figure}
\subsection{Message passing}
The local monitoring described in the previous section is applied to each sensor independently. In order to make global decisions about the state of the system, messages from the sensors must be passed to the central hub (see Figure \ref{fig:schematic}). However, since there are constraints on communication in the system, the message passing process must be carefully designed.\\

At time $t = m + k$ , where $m$ is the historic period of length $m$, and $k$ is the monitoring time, each sensor makes a decision as to whether or not to transmit a message to the centre. This message vector is denoted as $\mathbf{M}_t = (M_{1,t}, M_{2,t}, \hdots , M_{d,t})$. We consider two different messaging regimes:
\begin{itemize}
    \item Centralized messaging regime: $\mathbf{M}_t =  \mathcal{T}_{i}(m,k,h)$.
    \item Distributed messaging regime: \begin{align}\label{distributed messaging regime}
  M_{i,t} =
  \begin{cases}
    \mathcal{T}_{i}(m,k,h) &\quad\text{if } w(k,h)\mathcal{T}_{i}(m,k,h) > c_{\textrm{Local}}, \\
    NULL &\quad\text{otherwise.} \\
  \end{cases}
\end{align}
\end{itemize}
The centralized massaging regime is one where there  is no constraint on the communication between the sensors and the centre, so all sensors send a message to the centre at each time instant. This is similar to the ``SUM'' scheme changepoint detection method proposed by \cite{10.1093/biomet/asq010}. However, when communication is expensive, a ``distributed'' messaging regime can be used where each of the sensors only send local monitoring statistics that exceed a chosen threshold. The $NULL$ means no message is sent. The threshold $c_{\textrm{Local}}$ can be chosen to control the fraction of transmitting sensors when there is no change. It is worth noting that when $c_{\textrm{Local}}=0$, the ``distributed'' messaging regime is equivalent to ``centralized'' messaging regime.
\subsection{Global monitoring}
In our paper, we assume that there is no communication delay between sensors and the central hub, so the message could be immediately received by the centre at time $t$. Based on the messages received, the centre will make the decision as to whether or not to flag a change.
\subsubsection{Combining messages}
Depending on different messaging regimes, the global MOSUM statistics are constructed as follows:
\begin{itemize}
    \item Centralized global MOSUM statistic: 
    \begin{align}
    \label{eq:global_MOSUM2} \mathcal{T}(m,k,h) = \sqrt{ \sum_{i=1}^{d} M_{i,t}^2   },
    \end{align}
    This is similar to the SUM scheme mentioned in Section \ref{sec:problem}. By using such a scheme, Formula \ref{eq:global_MOSUM2} is the idealistic scheme under dense change.
    \item Distributed global MOSUM statistic:
         \begin{align}
         \label{eq:global_MOSUM} \mathcal{T}(m,k,h) = \sqrt{ \sum_{i=1}^{d} M_{i,t}^2 \mathbbm{1}_{ \mathcal{T}_i(m,k,h) > c_{\textrm{Local}}}  },
          \end{align}
          where $NULL$ values in Formula \ref{distributed messaging regime} are taken to be zeros in the sum. The form of Equation \eqref{eq:global_MOSUM} is taken from the multivariate MOSUM \citep{kirch2015, Weber2017_1000068981, kirch2018}.
\end{itemize}
\subsubsection{Declaring the change}
Similar to the local monitoring procedure, a change is declared as soon as the weighted global MOSUM exceeds a threshold. A closed-end stopping rule can be used when the aim is to monitor changes within a fixed time. This can be formalised as
\begin{align}
  \label{eq:ST}
  \tau_{m,\Tilde{T}} = \min \left\{1 \leq k \leq \lfloor m\Tilde{T} \rfloor: w(k,h)\mathcal{T}(m,k,h) > c_{\textrm{Global}} \right\},
\end{align}
where $\min \{ \emptyset \} = \infty$ and the total length of the data $T=m\Tilde{T}$. If no change is detected by this stopping rule prior to $\lfloor m\Tilde{T} \rfloor$, the monitoring procedure is terminated. The parameter $\Tilde{T} > 0$ governing the length of the monitoring period is chosen in advance \citep{doi:10.1080/07474940801989087, AUE20122271}. \\\\
Figure \ref{fig:sparse_dense} shows the weighted global MOSUM statistic for the distributed and centralized messaging regimes on the same dataset. Whenever the weighted global MOSUM of distributed regime hits zero, there is no communication between the edges and the centre at that time.
\begin{figure}[]
  \centering
\includegraphics[width=0.7\textwidth]{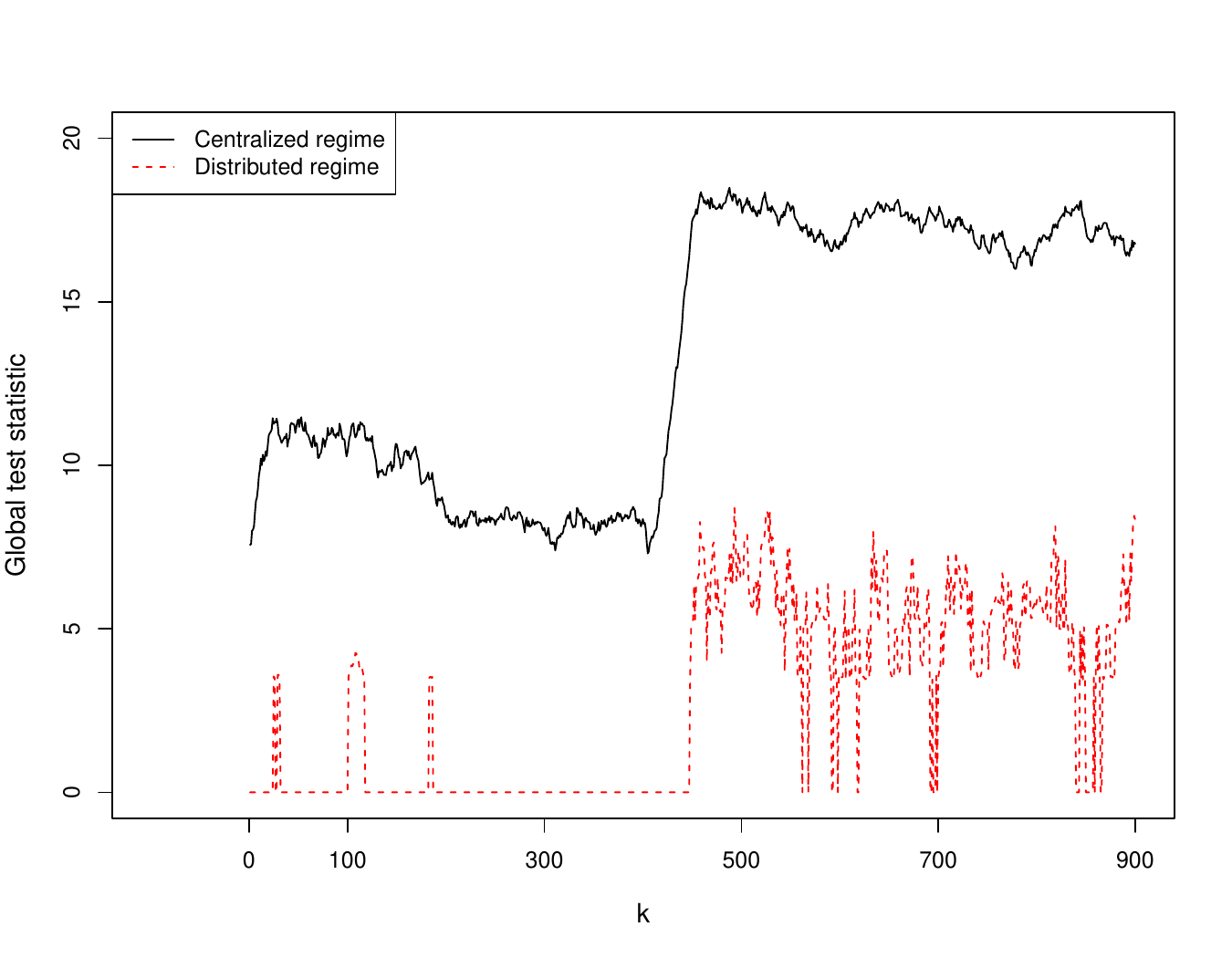}
  \caption{Example of the weighted global MOSUM statistic for the distributed (red dashed line) and centralized (black line) regime. The result is obtained with $T=1000, d=100, m=100$, $h=50, \delta=0.5$ and the number of affected sensors $p=50$. A value of $c_{\textrm{Local}} = 3.44$ was used in the distributed regime.}
  \label{fig:sparse_dense}
\end{figure}
In the next section, we will show the theoretical properties of our proposed method under $H_0$ and $H_A$.
\section{Theoretical properties for distributed MOSUM}
\label{sec:theory}
This section considers the theoretical properties of the closed-end stopping rule, $\tau_{m,\Tilde{T}}$ defined in Equation \eqref{eq:ST} as $m \rightarrow \infty$. Firstly, in Section \ref{sec:asymp_null} we find the limiting distribution under the null hypothesis for the different procedures. Then, appropriate choices for the thresholds, $c_{\textrm{Local}}$ and $c_{\textrm{Global}}$ are given in Section \ref{sec:theory_cvals} using these results. Finally, in Section \ref{sec:asymp_alt} we prove that the detection procedures we have studied are consistent under alternatives.\\

Three key assumptions are made in order to derive asymptotic results, which are the same in \cite{doi:10.1080/07474940801989087}, \cite{AUE20122271}, and \cite{Weber2017_1000068981}:
\begin{Assum}[Clean historic data]\label{clean}
  $h \rightarrow \infty$ as $m \rightarrow \infty$ and the location of the changepoint $\tau > m$ for $1 \leq i \leq d$.
\end{Assum}
This assumption is to guarantee we can get good estimators based on the training dataset, and it can be easily achieved in real applications.
\begin{Assum}[Asymptotic regime]\label{asymp}
  $h \rightarrow \infty$ as $m \rightarrow \infty$ and
  \begin{align*}
   \lim_{m \rightarrow \infty} \frac{h}{m} \rightarrow \beta \in (0,1]. 
  \end{align*}
\end{Assum}
This assumption quantifies the long run connection between the length of the historical period $m$ and the window size $h := h(m)$. 
\begin{Assum}[FCLT on errors]\label{FCLT}
  \begin{align*}
   \lim_{m \rightarrow \infty} \frac{1}{\sqrt{m}} S_i(mt) \overset{\mathcal{D}}{\longrightarrow} \sigma_i W_{i}(t) 
  \end{align*}
  where $\sigma_i > 0$, $\{ W_{i}(t), 0 \leq t < \infty\}$ is a standard Brownian motion when $h \rightarrow \infty$, and $S_i(x) = \sum_{t = 1}^{\lfloor x \rfloor} \epsilon_{i,t}$. $\sigma_i$ can be estimated by $\hat \sigma_i$. Furthermore, $\hat \sigma_i$ satisfying  $\hat \sigma_i \overset{\mathcal{P}}{\longrightarrow} \sigma_i$  as $m \rightarrow \infty$.
\end{Assum}
This assumption is a functional central limit theorem on the errors, $\epsilon$, in the model for the data \eqref{eq:model}.

\subsection{Asymptotics under the null}
\label{sec:asymp_null}
In this part, the asymptotic theories of our proposed method will be given, which can help guide the choice of thresholds.\\

The local monitoring process of our proposed method within each sensor is the same as univariate MOSUM detection process. Thus, Theorem \ref{thm:local} and Corollary \ref{thm:local_cor} of the local MOSUM can be directly cited from \cite{doi:10.1080/07474940801989087}, \cite{AUE20122271} and \cite{Weber2017_1000068981}. For simplicity, we denote
\begin{align}
\label{brownian}
    Z_i(t) = \left| W_{i}\left( \frac{1}{\beta} + t \right) - W_i\left( \frac{1}{\beta} + t - 1 \right) - \beta W_{i}\left( \frac{1}{\beta}\right) \right|, \quad 1 \leq i \leq d
\end{align}
where $\{ W_{i}(t), 0 \leq t < \infty\}$ are independent standard Brownian motions.
\begin{Thm}[\emph{Local MOSUM}]
  \label{thm:local} 
  If assumption \ref{clean}-\ref{FCLT}, and model \ref{eq:model_eq} holds, then under $H_0$, let $k = ht$ for any $t > 0$ 
  \begin{align*}
    \lim_{m \rightarrow \infty} w(k,h)\mathcal{T}_i(m,k,h) \overset{\mathcal{D}}{\longrightarrow} \rho(t)Z_i(t).
  \end{align*}
\end{Thm}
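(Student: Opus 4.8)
The plan is to reduce the statistic to a functional of the error partial-sum process $S_i$ and then push the functional central limit theorem of Assumption \ref{FCLT} through the continuous mapping theorem. Under $H_0$ the data satisfy model \eqref{eq:model_eq}, so $X_{i,t}=\epsilon_{i,t}$ and, up to the normalisation fixed in Section \ref{sec:methodology}, the local MOSUM statistic is a centred moving sum of the errors,
\begin{align*}
  \mathcal{T}_i(m,k,h) = \frac{1}{\hat\sigma_i\sqrt{h}}\left| S_i(m+k)-S_i(m+k-h)-\frac{h}{m}\,S_i(m)\right|,
\end{align*}
where the last term is the contribution of the training-sample estimate of the mean. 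Writing $k=ht$ and factoring $h$ out of each argument gives $m+k = h(\tfrac{m}{h}+t)$, $m+k-h = h(\tfrac{m}{h}+t-1)$ and $m = h\cdot\tfrac{m}{h}$; by Assumption \ref{asymp}, $\tfrac{m}{h}\to\tfrac{1}{\beta}$, so these evaluation points converge to $\tfrac{1}{\beta}+t$, $\tfrac{1}{\beta}+t-1$ and $\tfrac{1}{\beta}$ respectively. Since $\beta\in(0,1]$ we have $\tfrac{1}{\beta}\ge 1$, so all three limiting time points are strictly positive and lie in a bounded interval.

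Next I would upgrade the FCLT from the $m$-scale to the $h$-scale. Because $h\sim\beta m$, the rescaling $\tfrac{1}{\sqrt h}S_i(h\,\cdot)=\sqrt{m/h}\,\tfrac{1}{\sqrt m}S_i\!\big(m\,\tfrac{h}{m}\,\cdot\big)$ combined with Assumption \ref{asymp}, Assumption \ref{FCLT}, and the self-similarity identity $\tfrac{1}{\sqrt\beta}W_i(\beta\,\cdot)\overset{\mathcal{D}}{=}W_i(\cdot)$ yields $\tfrac{1}{\sqrt h}S_i(h\,\cdot)\overset{\mathcal{D}}{\longrightarrow}\sigma_i W_i(\cdot)$ in $C[0,\infty)$, again a scaled standard Brownian motion. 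The key step is then to read off the three terms \emph{jointly}: the evaluation map $\omega\mapsto\big(\omega(\tfrac{1}{\beta}+t),\omega(\tfrac{1}{\beta}+t-1),\omega(\tfrac{1}{\beta})\big)$ is continuous on path space, so the continuous mapping theorem gives joint convergence and hence, after dividing by $\sqrt h$,
\begin{align*}
  \frac{1}{\sqrt h}\left(S_i(m+k)-S_i(m+k-h)-\frac{h}{m}S_i(m)\right)\overset{\mathcal{D}}{\longrightarrow}\sigma_i\!\left[W_i\!\left(\tfrac{1}{\beta}+t\right)-W_i\!\left(\tfrac{1}{\beta}+t-1\right)-\beta W_i\!\left(\tfrac{1}{\beta}\right)\right],
\end{align*}
where the factor $\beta$ in the last term comes from the deterministic prefactor $h/m\to\beta$.

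Finally I would apply the continuous map $x\mapsto|x|$ and invoke Slutsky's theorem twice: once with the consistency $\hat\sigma_i\overset{\mathcal{P}}{\longrightarrow}\sigma_i$ of Assumption \ref{FCLT} to cancel the scale and obtain $\mathcal{T}_i(m,k,h)\overset{\mathcal{D}}{\longrightarrow}Z_i(t)$ with $Z_i$ as in \eqref{brownian}, and once with the deterministic weight $w(ht,h)\to\rho(t)$ fixed in Section \ref{sec:methodology} to conclude $w(k,h)\mathcal{T}_i(m,k,h)\overset{\mathcal{D}}{\longrightarrow}\rho(t)Z_i(t)$. I expect the main obstacle to be the \emph{joint} convergence of the three partial-sum functionals rather than merely their marginals: this is precisely why the functional (as opposed to pointwise) form of the central limit theorem is assumed, and the remaining technical care lies in transferring it cleanly from the $m$-scale to the $h$-scale via Brownian self-similarity and in controlling the floor-function discretisation, both of which are harmless because the limiting process has continuous paths.
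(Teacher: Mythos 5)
Your argument is correct, and it fills in a proof that the paper itself does not supply: the paper simply cites Theorem \ref{thm:local} from Horv\'ath et al.~(2008), Aue et al.~(2012) and Weber (2017), noting that the local monitoring process coincides with the univariate MOSUM. What you have written is essentially the standard argument underlying those references --- decompose the centred moving sum as $S_i(m+k)-S_i(m+k-h)-\tfrac{h}{m}S_i(m)$, transfer Assumption \ref{FCLT} to the $h$-scale via Assumption \ref{asymp} and Brownian self-similarity, obtain joint convergence of the three evaluation points by the continuous mapping theorem, and finish with $|\cdot|$, Slutsky for $\hat\sigma_i\overset{\mathcal{P}}{\to}\sigma_i$, and the deterministic weight --- and it lands exactly on $\rho(t)Z_i(t)$ with $Z_i$ as in \eqref{brownian}. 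The only cosmetic caveat is bookkeeping of the $h^{-1/2}$ factor: judging from the proof of Theorem \ref{thm:global_alt}, the paper places it inside $w(k,h)=h^{-1/2}\rho(k/h)$ rather than inside $\mathcal{T}_i$, whereas you absorb it into the statistic; the product $w(k,h)\mathcal{T}_i(m,k,h)$ is identical either way, so nothing of substance changes.
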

\begin{corollary}[\emph{Local MOSUM - asymptotic type-I error}]
  \label{thm:local_cor}
  Under $H_0$, for any $\Tilde{T} > 0$ and $i$th data stream,
  \begin{align*}
    \lim_{m \rightarrow \infty} P \left( \tau_{m,\Tilde{T}}^{(i)} < \infty \right) = P \left( \sup_{0 \leq t \leq \Tilde{T}/\beta} \rho(t)Z_{i}(t) > c_{\textrm{Local}} \right).
  \end{align*}
  Thus, the false alarm rate for one data stream is asymptotically equal to a pre-specified type-I-error $\in (0,1)$.
\end{corollary}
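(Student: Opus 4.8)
The plan is to recast the one-sided stopping event as a level-crossing of the weighted local detector and then pass to the limit using the functional version of Theorem \ref{thm:local}. First I would write the false-alarm event explicitly: since $\tau_{m,\Tilde{T}}^{(i)}$ is the first monitoring time at which $w(k,h)\mathcal{T}_i(m,k,h)$ exceeds $c_{\textrm{Local}}$ within the closed-end window, the event $\{\tau_{m,\Tilde{T}}^{(i)} < \infty\}$ coincides with
\begin{align*}
  \left\{ \max_{1 \leq k \leq \lfloor \Tilde{T} m \rfloor} w(k,h)\,\mathcal{T}_i(m,k,h) > c_{\textrm{Local}} \right\}.
\end{align*}
Reparametrising by $t = k/h$ and using Assumption \ref{asymp} ($h/m \to \beta$), the monitoring index $k \in \{1,\dots,\lfloor \Tilde{T} m\rfloor\}$ corresponds to $t$ ranging over $[1/h,\,\Tilde{T} m/h]$, an interval that converges to $[0,\Tilde{T}/\beta]$, which is exactly the index set appearing on the right-hand side of the claim.

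The core step is to upgrade Theorem \ref{thm:local}, which controls the one-dimensional law at a fixed $t$, to weak convergence of the entire rescaled process $t \mapsto w(ht,h)\mathcal{T}_i(m,ht,h)$ towards $t \mapsto \rho(t) Z_i(t)$ on a suitable function space (e.g.\ $D[0,\Tilde{T}/\beta]$ with the Skorokhod topology, or $C$ on a compact sub-interval bounded away from the origin). I would obtain this by observing that the weighted MOSUM statistic is a continuous functional of the partial-sum process $S_i$: Assumption \ref{FCLT} supplies weak convergence of $m^{-1/2} S_i(m\cdot)$ to $\sigma_i W_i(\cdot)$, and applying the continuous mapping theorem to the map that forms the increments of $S_i$ over the moving window and normalises by $w$ yields the process-level convergence. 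The consistency $\hat\sigma_i \overset{\mathcal{P}}{\longrightarrow} \sigma_i$ from Assumption \ref{FCLT} is absorbed by Slutsky's theorem, so estimating the variance does not perturb the limit.

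With functional convergence in hand, I would apply the continuous mapping theorem a second time to the supremum functional $f \mapsto \sup_{0 \leq t \leq \Tilde{T}/\beta} f(t)$, which is continuous in the uniform metric and almost-surely continuous at the continuous limit $\rho Z_i$, giving
\begin{align*}
  \max_{1 \leq k \leq \lfloor \Tilde{T} m \rfloor} w(k,h)\,\mathcal{T}_i(m,k,h) \overset{\mathcal{D}}{\longrightarrow} \sup_{0 \leq t \leq \Tilde{T}/\beta} \rho(t) Z_i(t).
\end{align*}
To convert this distributional convergence of the maximum into convergence of the probability $P(\,\cdot\, > c_{\textrm{Local}})$, I would invoke the Portmanteau theorem, which requires the limiting law to place no atom at $c_{\textrm{Local}}$; this holds because the continuous process $\rho Z_i$, being a weighted absolute value of a Gaussian process, has a supremum with atomless distribution on $(0,\infty)$. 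The final sentence is then immediate: since this supremum is a continuous random variable whose distribution function is strictly increasing on its support, choosing $c_{\textrm{Local}}$ to be the corresponding upper quantile makes the asymptotic false-alarm probability equal to any prescribed type-I error in $(0,1)$.

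The main obstacle I anticipate is the transfer from the discrete maximum over integer monitoring times $k$ to the continuous supremum of the limit, together with the behaviour near the left endpoint $t \downarrow 0$, where the weight $\rho$ (and hence the normalisation $w$) is delicate and the moving window is not yet full. Establishing tightness of the rescaled detector process up to the boundary, so that the discretisation error vanishes and the supremum functional is genuinely continuous along the converging sequence, is the technical heart of the argument; everything else reduces to the two applications of the continuous mapping theorem described above, which is why the result can be cited directly from the univariate MOSUM literature.
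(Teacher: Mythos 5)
Your argument is correct and is essentially the standard proof: the paper itself gives no proof of this corollary, stating instead that it ``can be directly cited from'' the univariate MOSUM literature (Horv\'ath et al., Aue et al., Weber), and the route you describe --- rewriting the stopping event as a level-crossing of the running maximum, upgrading the pointwise limit of Theorem~\ref{thm:local} to functional convergence of the rescaled detector via the FCLT, Slutsky for $\hat\sigma_i$, and the continuous mapping theorem, then applying the supremum functional and Portmanteau at an atomless threshold --- is exactly the argument used in those references. You also correctly identify the only genuinely delicate step (tightness and the behaviour of the weight near $t=0$, needed to pass from the discrete maximum to the continuous supremum), which is handled in the cited works rather than in this paper.
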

Following the results of local MOSUM, similar results for global MOSUM follow readily. These can be used to choose thresholds given the pre-defined Type-I-error. Below we obtain two limiting distributions, for the centralized and distributed regime settings of Section \ref{sec:methodology} respectively.
\begin{Thm}[\emph{Global MOSUM}]
  \label{thm:global_null}
  Let $k = ht$ for any $t > 0$, then under $H_0$,
  \begin{align*}
    \lim_{m \rightarrow \infty} w(k,h)\mathcal{T}(m,k,h) \overset{\mathcal{D}}{\longrightarrow}       \rho(t)
    \begin{cases}
      \sqrt{ \sum_{i=1}^{d} Z_{i}(t)^2  }    &\quad\textrm{centralized case,} \\
      \sqrt{ \sum_{i=1}^{d} Z_i(t)^2\mathbbm{1}_{ \rho(t)Z_{i}(t) > c_{\textrm{Local}}  }} &\quad\textrm{distributed case}.      
    \end{cases}
  \end{align*}
\end{Thm}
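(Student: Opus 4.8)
The plan is to realise each weighted global statistic as a fixed aggregation map applied to the vector of weighted local statistics, and then to pass to the limit by the continuous mapping theorem, using Theorem~\ref{thm:local} for the marginal behaviour of each coordinate. Throughout fix $t > 0$, set $k = ht$, and write $Y_i^{(m)} := w(k,h)\,\mathcal{T}_i(m,k,h)$ for the weighted local statistic of the $i$th stream, so that Theorem~\ref{thm:local} gives $Y_i^{(m)} \overset{\mathcal{D}}{\longrightarrow} \rho(t)Z_i(t)$ as $m \to \infty$ for each $i$.

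The first step is to upgrade these $d$ marginal limits to a single joint limit for $\mathbf{Y}^{(m)} := (Y_1^{(m)}, \dots, Y_d^{(m)})$. Since the error sequences $\{\epsilon_{i,t}\}$ are independent across streams (Section~\ref{sec:problem}) and each $Y_i^{(m)}$ is a function of the $i$th stream alone, the coordinates of $\mathbf{Y}^{(m)}$ are mutually independent for every $m$; likewise the limiting Brownian motions $W_1,\dots,W_d$, and hence $Z_1(t),\dots,Z_d(t)$, are independent. Because the joint characteristic function of independent coordinates factorises, $\prod_{i}\phi_{Y_i^{(m)}} \to \prod_{i}\phi_{\rho(t)Z_i(t)}$, so by L\'evy's continuity theorem $\mathbf{Y}^{(m)} \overset{\mathcal{D}}{\longrightarrow} (\rho(t)Z_1(t),\dots,\rho(t)Z_d(t))$.

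For the centralized case the weighted global statistic is the $\ell_2$-aggregation $w(k,h)\mathcal{T}(m,k,h) = \big(\sum_{i=1}^d (Y_i^{(m)})^2\big)^{1/2} = g(\mathbf{Y}^{(m)})$, where I have pulled the nonnegative scalar $w(k,h)$ inside the root and $g(\mathbf{x}) = (\sum_i x_i^2)^{1/2}$. As $g$ is continuous on $\mathbb{R}^d$, the continuous mapping theorem applied to the joint limit gives $g(\mathbf{Y}^{(m)}) \overset{\mathcal{D}}{\longrightarrow} \big(\sum_{i=1}^d \rho(t)^2 Z_i(t)^2\big)^{1/2} = \rho(t)\big(\sum_{i=1}^d Z_i(t)^2\big)^{1/2}$, which is the stated limit.

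The distributed case is the only genuinely delicate point, and I expect it to be the main obstacle. Here the aggregation map carries the local thresholds, $w(k,h)\mathcal{T}(m,k,h) = g_c(\mathbf{Y}^{(m)})$ with $g_c(\mathbf{x}) = \big(\sum_{i=1}^d x_i^2\,\mathbbm{1}\{x_i > c_{\textrm{Local}}\}\big)^{1/2}$, and $g_c$ is discontinuous on the set $D := \bigcup_{i=1}^d \{\mathbf{x} : x_i = c_{\textrm{Local}}\}$. I would therefore invoke the extended continuous mapping theorem for almost surely continuous maps, whose only extra hypothesis is that the limiting law assigns no mass to $D$, i.e. $P\big(\rho(t)Z_i(t) = c_{\textrm{Local}} \text{ for some } i\big) = 0$. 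This is where the work lies: each $Z_i(t)$ is the modulus of a fixed linear combination of Brownian-motion values, hence the modulus of a centred Gaussian of positive variance, so it has an absolutely continuous law and $P(\rho(t)Z_i(t) = c_{\textrm{Local}}) = 0$ for every $i$ (using $c_{\textrm{Local}} > 0$ and $\rho(t) > 0$); a union bound over the finitely many streams removes all of $D$. The extended continuous mapping theorem then yields $g_c(\mathbf{Y}^{(m)}) \overset{\mathcal{D}}{\longrightarrow} \rho(t)\big(\sum_{i=1}^d Z_i(t)^2\,\mathbbm{1}\{\rho(t)Z_i(t) > c_{\textrm{Local}}\}\big)^{1/2}$, completing the argument. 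Thus the difficulty is not the convergence itself but verifying the no-atom condition that licenses the continuous mapping theorem across the discontinuity introduced by the thresholding step.
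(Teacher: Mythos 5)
Your proposal is correct and follows essentially the same route as the paper's proof: decompose the (squared) global statistic as a sum of (hard-thresholded) squared local statistics and pass to the limit term by term using Theorem~\ref{thm:local}. The only difference is that you are more careful than the paper, which silently assumes joint convergence across the $d$ independent streams and glosses over the discontinuity introduced by the indicator; your L\'evy-continuity argument for the joint limit and the no-atom check $P\left(\rho(t)Z_i(t) = c_{\textrm{Local}}\right) = 0$ fill exactly those two gaps.
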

\begin{proof}
See Appendix \ref{sec:global_null}.
\end{proof}
Thus, their limiting distribution will be a function of Gaussian process. Using the Theorem \ref{thm:global_null}, the following may be obtained:
\begin{corollary}[\emph{Global MOSUM - asymptotic type-I error}]
  \label{thm:global_cor}
  Under $H_0$, for any $\Tilde{T} > 0$,
  \begin{align*}
    \lim_{m \rightarrow \infty} P \left( \tau_{m,\Tilde{T}} < \infty \right) = 
    \begin{cases} P\left(
     \sup_{0 \leq t \leq \Tilde{T}/\beta} \rho(t) \sqrt{ \sum_{i=1}^{d} Z_{i}(t)^2  }  > c_{\textrm{Global}} \right)   &\quad\textrm{centralized case,} \\
       P\left( \sup_{0 \leq t \leq \Tilde{T}/\beta} \rho(t) \sqrt{ \sum_{i=1}^{d} Z_i(t)^2\mathbbm{1}_{\rho(t)Z_{i}(t) > c_{\textrm{Local}}  }} > c_{\textrm{Global}} \right) &\quad\textrm{distributed case}.      
    \end{cases}
  \end{align*}
\end{corollary}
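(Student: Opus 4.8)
The plan is to read the corollary as a direct consequence of the functional limit in Theorem~\ref{thm:global_null} via the continuous mapping theorem, after rewriting the false-alarm event as a level crossing. First I would record that, by the definition of the closed-end stopping rule in \eqref{eq:ST}, the event $\{\tau_{m,\tilde{T}} < \infty\}$ is exactly the event that the weighted global detector $w(k,h)\mathcal{T}(m,k,h)$ exceeds $c_{\textrm{Global}}$ for some $k$ in the monitoring range. Writing $k = ht$, this range corresponds to $t \in [0,\tilde{T}/\beta]$ in the limit (using Assumption~\ref{asymp}, $h/m \to \beta$), so that
\begin{align*}
  P\left( \tau_{m,\tilde{T}} < \infty \right) = P\left( \max_{k} w(k,h)\mathcal{T}(m,k,h) > c_{\textrm{Global}} \right),
\end{align*}
with the maximum taken over the integer grid $\{k : 0 \le k/h \le \tilde{T}/\beta\}$.

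Next I would apply the continuous mapping theorem to the supremum functional $\Phi(f) = \sup_{0 \le t \le \tilde{T}/\beta} f(t)$. In the centralized case the limit process $\rho(t)\sqrt{\sum_i Z_i(t)^2}$ has continuous paths, so $\Phi$ is continuous in the uniform topology and the passage to the limit is immediate. In the distributed case the indicator $\mathbbm{1}_{\rho(t)Z_i(t) > c_{\textrm{Local}}}$ makes the limit paths jump whenever some $\rho(t)Z_i(t)$ crosses $c_{\textrm{Local}}$; here I would work in the Skorokhod space $D[0,\tilde{T}/\beta]$ equipped with the $J_1$ topology, on which $\Phi$ remains continuous (any time change leaves the supremum unchanged), so the continuous mapping theorem again gives $\max_k w(k,h)\mathcal{T} \overset{\mathcal{D}}{\longrightarrow} \Phi$ of the corresponding limit process. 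Finally, to turn this convergence in distribution into convergence of the exceedance probabilities, I would invoke the portmanteau lemma, which requires $c_{\textrm{Global}}$ to be a continuity point of the distribution of the limiting supremum, i.e. $P(\Phi = c_{\textrm{Global}}) = 0$. This atomlessness follows because the limit is built from non-degenerate Gaussian increments, so its running maximum has a continuous law away from $0$.

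I expect two points to carry the real work. The first is upgrading the marginal (fixed-$t$) statement of Theorem~\ref{thm:global_null} to genuine functional weak convergence in $D[0,\tilde{T}/\beta]$, together with the passage from the discrete grid $\{k/h\}$ to the continuous-time supremum over $t$; this needs a tightness and oscillation-control argument showing the maximum over the grid differs negligibly from the supremum over the interval as $h \to \infty$. The second, and the main obstacle, is the discontinuity introduced by the local thresholding in the distributed case: I would have to verify carefully that $J_1$-continuity of $\Phi$ genuinely applies despite the jumps, and that the crossing times $\{t : \rho(t)Z_i(t) = c_{\textrm{Local}}\}$ contribute no atom to the law of $\Phi$. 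A convenient route would be the Skorokhod representation theorem, realizing the detector processes on a common probability space so that they converge almost surely in $J_1$; the almost-sure continuity of $\Phi$ together with a bounded-convergence argument then delivers the two cases of the corollary directly.
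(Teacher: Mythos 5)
Your proposal is correct and follows the same route the paper (implicitly) takes: the corollary is stated without any written proof as an immediate consequence of Theorem~\ref{thm:global_null}, obtained by rewriting $\{\tau_{m,\Tilde{T}}<\infty\}$ as a supremum exceedance over the monitoring window $t\in[0,\Tilde{T}/\beta]$ and passing to the limit. The two technical points you flag --- upgrading the fixed-$t$ convergence of Theorem~\ref{thm:global_null} to functional convergence (including the discrete grid $k=ht$ versus the continuous supremum), and the discontinuity of the hard-thresholding in the distributed case together with atomlessness of the limiting supremum at $c_{\textrm{Global}}$ --- are genuine and are not addressed anywhere in the paper either, so your treatment is if anything more careful than the source.
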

This result can lead us to find the local and global thresholds which can obtain the pre-defined type-I-error.
\subsection{Obtaining critical values}
\label{sec:theory_cvals}
Using the results of the previous section, appropriate critical values can be found such that the asymptotic type-I error is controlled for the different procedures. To achieve this the stochastic processes $\{ Z_{i}(t), 0 \leq t \leq \Tilde{T}/\beta, 1 \leq i \leq d\}$ need to be approximated on a fine grid. This is done in the same way as \cite{AUE20122271}, simulating the component standard Brownian motions using ten thousand i.i.d.~standard normal random variables. The parameters used were $\beta = 1/2$ and $\Tilde{T} = 10$. Tables \ref{tab:Local_globdense_cval} and \ref{tab:globsparse_cval} give critical values for $\alpha \in \{0.10, 0.05, 0.01\}$.
\begin{table}[h]
  \centering
  \begin{tabular}{ccc}
    \hline
    $\alpha$ & $c_{\textrm{Local}}$ & $c_{\textrm{Global}}^{\textrm{Centralized}}(\alpha)$   \\
    \hline
    0.10 & 0 & 14.1  \\
    0.05 & 0 & 14.4  \\
    0.01 & 0 & 15.0  \\
    \hline
  \end{tabular}  
  \caption{Critical values for the centralized procedures, results averaged over five thousand replications.}
  \label{tab:Local_globdense_cval}
\end{table}
\begin{table}[h]
  \centering
  \begin{tabular}{ccc}
    \hline
    $\alpha$                 & $c_{\textrm{Local}}$ & $c_{\textrm{Global}}^{\textrm{Distributed}}(\alpha)$ \\
    \hline
    \multirow{3}{*}{0.10}    &   3.15              &        7.48              \\
                             &   3.44              &        6.70            \\
                             &   4.05              &        5.20             \\
    \hline
    \multirow{3}{*}{0.05}    &   3.15              &        7.89         \\
                             &   3.44              &        7.16          \\
                             &   4.05              &        6.02          \\
    \hline
    \multirow{3}{*}{0.01}    &   3.15              &        8.74         \\
                             &   3.44              &        8.01          \\
                             &   4.05              &        6.59         \\
    \hline
  \end{tabular}
  \caption{Critical values for the distributed procedure with different values for $c_{\textrm{Local}}$, results averaged over five thousand replications.}
  \label{tab:globsparse_cval}
\end{table}\\
\noindent
Since the critical values obtained above are valid asymptotically (in $m$), an important question to consider is how they perform in finite samples. Numerical results of empirical size in the finite sample are shown in Table \ref{tab:FA}. Thse indicate that the implementation in the finite sample setting can be conservative,  as  per \cite{AUE20122271}. However, approximately, the type-I error is controlled at the correct level for both of the global procedures in finite samples. 

\begin{table}[ht]
  \centering
  \begin{tabular}{ccccccc}
    \hline
    \multirow{2}{*}{Method}   & \multirow{2}{*}{$c_{\textrm{Local}}$} & \multirow{2}{*}{$c_{\textrm{Global}}$} & \multicolumn{3}{c}{Proportion of false alarms}  \\
                              &                                      &                     & $m = 200$ & $m=400$ & $m=500$\\
    \hline
    \multirow{3}{*}{Distributed}   &        3.15         &        7.89      & 5.14\% &   5.18\% & 5.5\%           \\
                              &        3.44         &        7.16      & 5.3\% & 5.38\% &5.12\%  \\
                                &        4.05         &        6.02   & 5.5\%  & 5.64\%  &5.16\%   \\
      Centralized                     &         -           &        14.4     & 5.92\% &   5.28\%  & 5.3\%    \\
    \hline
  \end{tabular}
  \caption{Empirical size, results averaged over one thousand replications with $\alpha=0.05, \Tilde{T}=10$, and $\beta=1/2$.}
  \label{tab:FA}
\end{table}
\subsubsection{The choice of local threshold $c_{\textrm{Local}}$}
The values for $c_{\textrm{Local}}$ used in Table \ref{tab:globsparse_cval} are somewhat arbitrary. The main influence of the value of local threshold is that it controls the proportion of messages that the system can pass (on average) per iteration. For $d$ streams, the number of sensors passing message at each time step is: 
\begin{corollary}[\emph{Transmission cost}]
   For any t>0 and k=ht, the expected fraction of transmitting sensors at each time step is
  \begin{align*}
    \bar \Delta_t=d P\left(\rho(t)|Z| >c_{\textrm{Local}}\right).
  \end{align*}
where $Z$ is the standard normal distribution. 
\end{corollary}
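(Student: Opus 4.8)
The plan is to write the number of transmitting sensors as a sum of indicator variables, take expectations by linearity, and then pass to the limit $m\to\infty$ using the local MOSUM convergence of Theorem~\ref{thm:local}. Recall that in the distributed scheme sensor $i$ transmits its local statistic at the time point $k=ht$ precisely when that weighted statistic exceeds the local threshold, i.e.\ when $w(k,h)\mathcal{T}_i(m,k,h)\geq c_{\textrm{Local}}$. Hence the number of transmitting sensors at that step is $N_m=\sum_{i=1}^d \mathbbm{1}\{w(k,h)\mathcal{T}_i(m,k,h)\geq c_{\textrm{Local}}\}$, and by linearity of expectation $E[N_m]=\sum_{i=1}^d P\big(w(k,h)\mathcal{T}_i(m,k,h)\geq c_{\textrm{Local}}\big)$.

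First I would apply Theorem~\ref{thm:local}, which gives $w(k,h)\mathcal{T}_i(m,k,h)\overset{\mathcal{D}}{\longrightarrow}\rho(t)Z_i(t)$ for each fixed $t>0$. Since the limiting law of $\rho(t)Z_i(t)$ is that of the modulus of a nondegenerate centered Gaussian, it is continuous, so $c_{\textrm{Local}}$ is a continuity point of its distribution function and the portmanteau theorem yields $P(w(k,h)\mathcal{T}_i\geq c_{\textrm{Local}})\to P(\rho(t)Z_i(t)>c_{\textrm{Local}})$. Because each summand is bounded by $1$, interchanging the limit with the finite sum over $i$ is immediate, giving $\bar\Delta_t=\lim_{m\to\infty}E[N_m]=\sum_{i=1}^d P(\rho(t)Z_i(t)>c_{\textrm{Local}})$.

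The second step is to collapse the sum into the factor $d$ and to identify the marginal law of $Z_i(t)$. By Assumption~\ref{FCLT} the driving Brownian motions $W_1,\dots,W_d$ are independent and identically structured, so the processes $Z_1(t),\dots,Z_d(t)$ defined in~\eqref{brownian} are identically distributed; each probability in the sum is therefore the same, and $\bar\Delta_t=d\,P(\rho(t)Z_1(t)>c_{\textrm{Local}})$. Fixing $t$, the quantity inside the absolute value in~\eqref{brownian} is a mean-zero linear combination of jointly Gaussian increments of $W_i$, so $Z_i(t)$ is the absolute value of a centered normal variable; computing its variance from $\mathrm{Cov}(W_i(s),W_i(r))=\min(s,r)$ (noting in particular that for $t\geq 1$ the length-one increment $W_i(1/\beta+t)-W_i(1/\beta+t-1)$ is independent of the level $W_i(1/\beta)$) shows that, under the standardisation already built into Theorem~\ref{thm:local}, $Z_i(t)\overset{\mathcal{D}}{=}|Z|$ with $Z$ standard normal. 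Substituting this gives the claimed $\bar\Delta_t=d\,P(\rho(t)|Z|>c_{\textrm{Local}})$.

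I expect the main obstacle to be this last marginal-distribution computation: one must keep careful track of the covariance between the sliding length-one increment and the training-period level $W_i(1/\beta)$ in order to confirm that $Z_i(t)$ is indeed the modulus of a standard normal under the paper's normalisation. By contrast, the limit/expectation interchange is routine, since the indicators are uniformly bounded and the limiting distribution is continuous at $c_{\textrm{Local}}$.
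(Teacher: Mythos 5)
The paper states this corollary without any proof, so your argument has to stand on its own. Your overall strategy is sound and is surely the intended one: write the number of transmitting sensors as $\sum_{i=1}^d \mathbbm{1}\{w(k,h)\mathcal{T}_i(m,k,h) > c_{\textrm{Local}}\}$, take expectations, pass to the limit via Theorem \ref{thm:local} and the portmanteau theorem (the limiting law is continuous, so this step is fine), and use the fact that the $d$ streams are identically distributed to collapse the sum to $d$ times a single probability. Up to that point everything is correct. (A side remark: $d\,P(\cdot)$ is the expected \emph{number} of transmitting sensors, not the expected fraction; that imprecision sits in the corollary's statement rather than in your proof.)

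The gap is exactly at the step you yourself flagged as the main obstacle, and the computation does not come out the way you assert. Write $G = W_i(1/\beta+t) - W_i(1/\beta+t-1) - \beta W_i(1/\beta)$ for the Gaussian inside \eqref{brownian}. For $t\geq 1$ the length-one increment is indeed independent of $W_i(1/\beta)$, but independence means the variances \emph{add}, giving $\mathrm{Var}(G) = 1 + \beta^2\cdot(1/\beta) = 1+\beta$, not $1$; for $0<t<1$ one has $\mathrm{Cov}\bigl(W_i(1/\beta+t)-W_i(1/\beta+t-1),\, W_i(1/\beta)\bigr) = 1-t$ and hence $\mathrm{Var}(G) = 1-\beta+2\beta t$. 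In neither case is $Z_i(t)$ distributed as $|Z|$ with $Z$ standard normal (the variance equals $1$ only at $t=1/2$), and since $\beta\in(0,1]$ the discrepancy does not vanish. The extra variance comes precisely from the $-\beta W_i(1/\beta)$ term, i.e.\ from estimating $\mu_i$ on the training sample, which is the covariance bookkeeping you said needed care. So what your argument actually establishes is $\bar\Delta_t = d\,P\bigl(\rho(t)\,|N(0,\sigma^2(t))| > c_{\textrm{Local}}\bigr)$ with $\sigma^2(t)=1+\beta$ for $t\ge 1$. To recover the formula as printed you would need either to treat the pre-change mean as exactly known (so the $\beta W_i(1/\beta)$ term disappears) or to read the corollary as an approximation valid when $\beta=\lim h/m$ is small; as written, the assertion $Z_i(t)\overset{\mathcal{D}}{=}|Z|$ is the one step of your proof that is false rather than merely unproved.
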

Therefore, the local threshold can be chosen based on the restriction of the transmission cost. Combined with pre-defined type-I-error, the global threshold will be given based on Theorem \ref{thm:global_null}.

\subsection{Asymptotics under the alternative}
\label{sec:asymp_alt}

Under the alternative it is assumed that there is a changepoint at monitoring time $k^{*}$ and a subset $\mathcal{S}$ of the data streams have an altered mean
\begin{align*}
  H_A: \tau = m + k^{*} \quad  \& \quad \exists \mathcal{S} \subset \{1, 2, \hdots, d\}: \delta_i \ne 0 \quad \textrm{for } i \in \mathcal{S}.
\end{align*}
Deriving sharp asymptotic results on the detection delay of the proposed method is challenging, and thus we focus only on giving consistency results. A procedure is consistent if it stops in finite time with probability approaching one as $m \rightarrow \infty$. In other words, the test statistic should tend to infinity as $m \rightarrow \infty$. In the asymptotic regime of interest, we additionally assume that the changepoint $k^{*}$ grows at the same order as $h$, that is $\frac{k^*}{h} \rightarrow  \gamma \geq 0$,
and the size of change $\delta_{i,t}$ satisfies $\sqrt{h}|\delta_{i,t}|\rightarrow \infty$ as $m \rightarrow \infty$ and $h \rightarrow \infty$. These assumptions are the same in \cite{AUE20122271}.
\begin{Thm}[\emph{Global MOSUM: Consistency}]
  \label{thm:global_alt}
  If the assumption above holds, under $H_A$,
  \begin{enumerate}[label=(\roman*)]
  \item the changepoint $k^{*} \leq \lfloor h \nu \rfloor$ for some $0 < \nu < \Tilde{T}\frac{m}{h}$,
  \item there exists a constant $c > 0$ such that $\rho(x+1) \geq c$ for all $x \in (\nu,  \Tilde{T}\frac{m}{h} - 1)$.
  \end{enumerate}
  Then, as $m \rightarrow \infty$ and $h \rightarrow \infty$
  \begin{align*}
    \max_{1 \leq k \leq \lfloor m\Tilde{T} \rfloor} w(k,h)\mathcal{T}(m,k,h) \overset{\mathcal{P}}{\longrightarrow} \infty.
  \end{align*}
\end{Thm}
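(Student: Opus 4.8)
The plan is to exhibit a single monitoring time and a single affected stream at which the weighted global statistic is guaranteed to blow up, and then invoke the trivial fact that the maximum over all monitoring times dominates this one value. \textbf{Reduction to one stream.} Fix any $j \in \mathcal{S}$, so $\delta_j \ne 0$. From the form of the global statistic in Theorem \ref{thm:global_null}, in the centralized case $\mathcal{T}(m,k,h) = \sqrt{\sum_{i=1}^d \mathcal{T}_i(m,k,h)^2} \ge \mathcal{T}_j(m,k,h)$, and in the distributed case the same bound holds as soon as $w(k,h)\mathcal{T}_j(m,k,h)$ exceeds the fixed threshold $c_{\textrm{Local}}$, so that the indicator attached to stream $j$ equals one. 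Hence for any chosen $k_0 \le \lfloor m\tilde{T}\rfloor$,
\[
  \max_{1\le k\le \lfloor m\tilde{T}\rfloor} w(k,h)\mathcal{T}(m,k,h) \ \ge\ w(k_0,h)\,\mathcal{T}_j(m,k_0,h),
\]
and it suffices to pick $k_0$ so that the right-hand side tends to infinity in probability; the divergence of $w(k_0,h)\mathcal{T}_j(m,k_0,h)$ also forces the distributed indicator to activate for all large $m$, so both cases are handled at once.

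\textbf{Choice of the evaluation time.} By condition (ii) the interval $\bigl(\nu,\ \tilde{T}\tfrac{m}{h}-1\bigr)$ is nonempty, so I would pick $x_0$ inside it and set $k_0 = \lfloor h(x_0+1)\rfloor$. The MOSUM window underlying $\mathcal{T}_j(m,k_0,h)$ then spans monitoring times approximately $[hx_0,\,h(x_0+1)]$; since condition (i) gives $k^* \le \lfloor h\nu\rfloor < hx_0$ (because $x_0>\nu$), for all large $h$ this window sits entirely in the post-change regime $\{t>\tau\}$. At the same time $k_0 \le h(x_0+1) \le m\tilde{T}$ keeps $k_0$ within the monitoring range $[1,\lfloor m\tilde{T}\rfloor]$. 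Finally, under Assumption \ref{asymp} the weight satisfies $w(k_0,h)\to\rho(x_0+1)$, and condition (ii) gives $\rho(x_0+1)\ge c>0$, so $w(k_0,h)\ge c/2$ for all large $m$.

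\textbf{Divergence of the local statistic.} On this window I would decompose $\mathcal{T}_j(m,k_0,h)$ into a deterministic signal term, the centred windowed noise, and the reference (training) term. Since the entire window is post-change, the signal contributes $\sqrt{h}\,|\delta_j|/\hat\sigma_j$; by Assumption \ref{FCLT} the windowed noise is $O_p(1)$, and using clean historic data (Assumption \ref{clean}) together with $h/m\to\beta$ the reference term is likewise $O_p(1)$; moreover $\hat\sigma_j \overset{\mathcal{P}}{\longrightarrow}\sigma_j>0$. The triangle inequality then yields $\mathcal{T}_j(m,k_0,h) = \tfrac{\sqrt{h}\,|\delta_j|}{\sigma_j}\bigl(1+o_p(1)\bigr)$, so the standing assumption $\sqrt{h}\,|\delta_j|\to\infty$ gives $\mathcal{T}_j(m,k_0,h)\overset{\mathcal{P}}{\longrightarrow}\infty$. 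Combining this with $w(k_0,h)\ge c/2$ and the reduction above completes the argument.

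\textbf{Main obstacle.} I expect the delicate part to be twofold. First, I must verify that the reference and windowed-noise contributions are genuinely of smaller order than the diverging signal, uniformly enough to factor out the $(1+o_p(1))$ term. Second, and more conceptually, I must confirm that conditions (i)--(ii) can be met \emph{simultaneously}, i.e.\ that there really exists a monitoring time whose window is fully post-change, lies inside $[1,\lfloor m\tilde{T}\rfloor]$, and carries a weight bounded away from zero. The precise role of (i) and (ii) is to guarantee exactly this, so the crux is translating the rescaled condition $\rho(x_0+1)\ge c$ into a lower bound on the finite-$m$ weight $w(k_0,h)$ via the convergence $w(k_0,h)\to\rho(x_0+1)$.
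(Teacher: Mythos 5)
Your argument is correct and follows essentially the same route as the paper's proof: evaluate the statistic at a window $k_0=\lfloor h(x_0+1)\rfloor$ lying entirely after the change, show that the signal term dominates the noise and reference terms so that $w(k_0,h)\mathcal{T}_j(m,k_0,h)\asymp\sqrt{h}\,|\delta_j|\rightarrow\infty$, and note that this divergence automatically activates the hard-threshold indicator in the distributed case (the paper keeps the full sum over $i\in\mathcal{S}$ rather than bounding below by a single stream, but that changes nothing). One bookkeeping caveat: with the paper's normalisation $w(k,h)=\rho(k/h)/\sqrt{h}$ one has $\mathcal{T}_j(m,k_0,h)=h|\delta_j|/\hat{\sigma}_j\,(1+o_p(1))$ and $w(k_0,h)\rightarrow 0$, rather than $\mathcal{T}_j=\sqrt{h}|\delta_j|/\sigma_j(1+o_p(1))$ and $w(k_0,h)\geq c/2$ as you state; the two misplaced factors of $\sqrt{h}$ cancel, so the product and the conclusion are unchanged.
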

\begin{proof}
See Appendix \ref{sec:global_alt}.
\end{proof}
Thus, our proposed method is consistent.
\section{Simulations}
\label{sec:sims}
In this section, we will present the numeric performance of our algorithm. Since the SUM procedure that is optimal when the change is dense, we will evaluate the performance in the dense case, specifically when the affected data streams $p=d$. Firstly, the different practical choices of thresholds at fixed type-I-error will be investigated. Here the performance of our proposed method was also compared against the idealistic setting. Finally, the effect of parameters and the violation of the independence assumption are investigated.\\\\
The set-up of the simulations is as follows. For simplicity, the data generating process under the null is that $X_{i,t} \sim N(0,1)$ for $1 \leq i \leq d$ and $1 \leq t \leq T$. To compare fairly, the type-I-error of all procedures is controlled to be $0.05$ under the null.\\  

The family of alternatives considered is that
\begin{align*}
      X_{i,t} \sim N(0,1)\quad \textrm{for} \quad 1 \leq i \leq d, 1 \leq t < \tau \quad \textrm{and} \quad X_{i,t} \sim N(\delta_i,1) \quad \textrm{for} \quad 1 \leq i \leq d, \tau \leq t \leq T.
\end{align*}
We assume the change will affect all the sensors instantaneously. But the size of the change is unknown. We consider two scenarios of mean shift: 1) Same size: $\delta_i=\delta=$ some constant values for $1 \leq i \leq d $; 2) Random size: $\delta_i=\eta N(0,1)$, where $\eta$ is the scale factor controlling the magnitude of size.
The average detection delay (ADD) $\bar{D}$ and average communication cost $\bar{\Delta}$ are then measured:
\begin{align*}
    \bar{D} &= E(\hat \tau -\tau |\hat \tau >\tau)\\
     \bar{\Delta} &=\sum_{t=1}^{\hat \tau}\frac{\sum_{i=1}^d\mathbbm{l}(w(k,h)\mathcal{T}_i(m,k,h)>c_{\textrm{Local}})}{\hat \tau -m +1},
\end{align*}
\subsection{The numerical dependency on local thresholds}\label{practical choice}
Our proposed method requires specifying two thresholds. Usually, $c_{\textrm{Global}}$ can be given based on the Theorem \ref{thm:global_cor} once $\alpha$ and $c_{\textrm{Local}}$ are confirmed. Therefore, it is crucial to pick an appropriate local threshold. This section gives numeric results with different values of local thresholds, which may provide some guidance in choosing the local threshold.

\begin{figure}[]%
 \centering
 \subfloat[Transmission cost for $d=100$ data streams.]{\label{trans_mu}\includegraphics[width=0.49\textwidth]{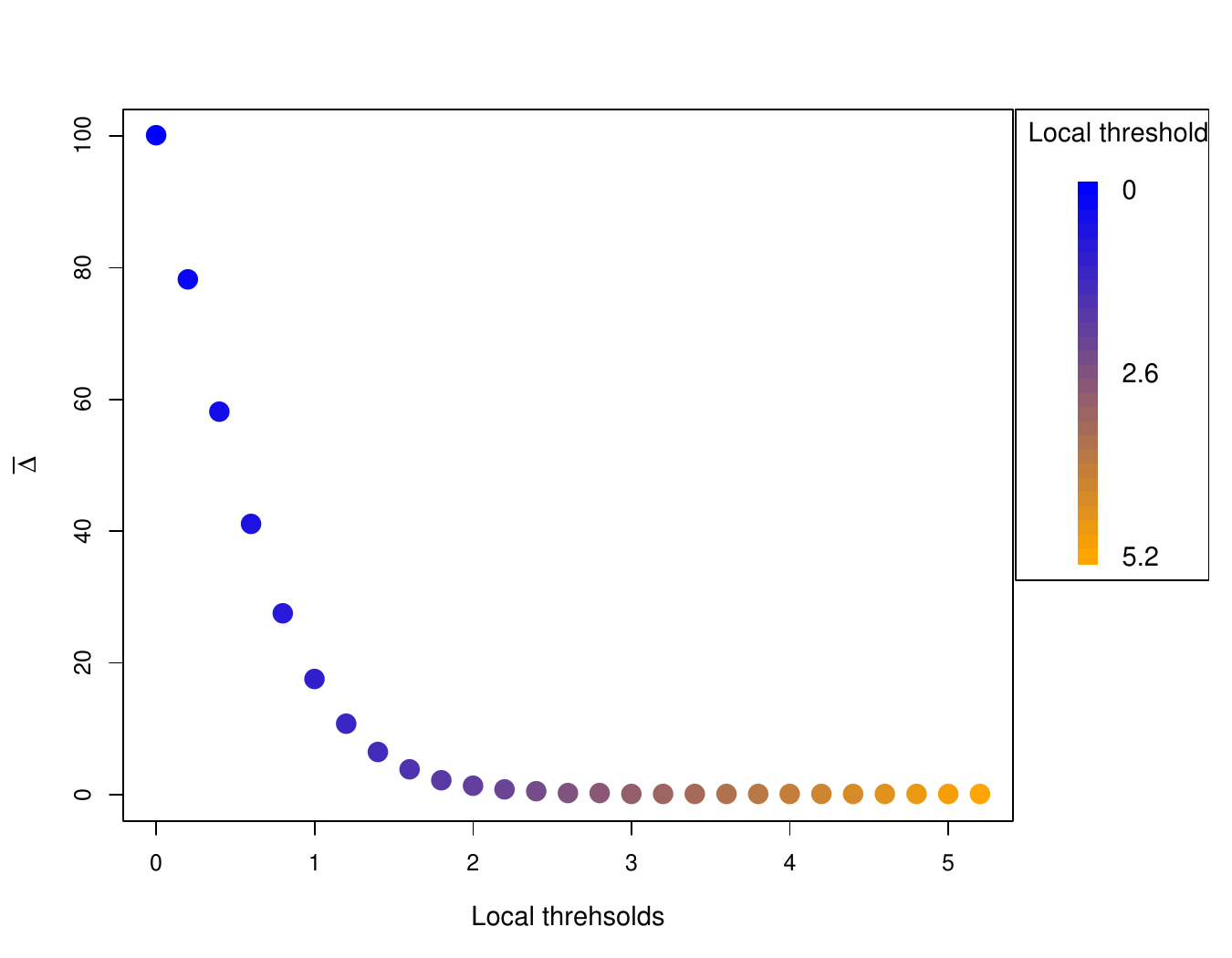}}\\
 \subfloat[average detection delay versus fixed $\delta$.]{\includegraphics[width=0.49\textwidth]{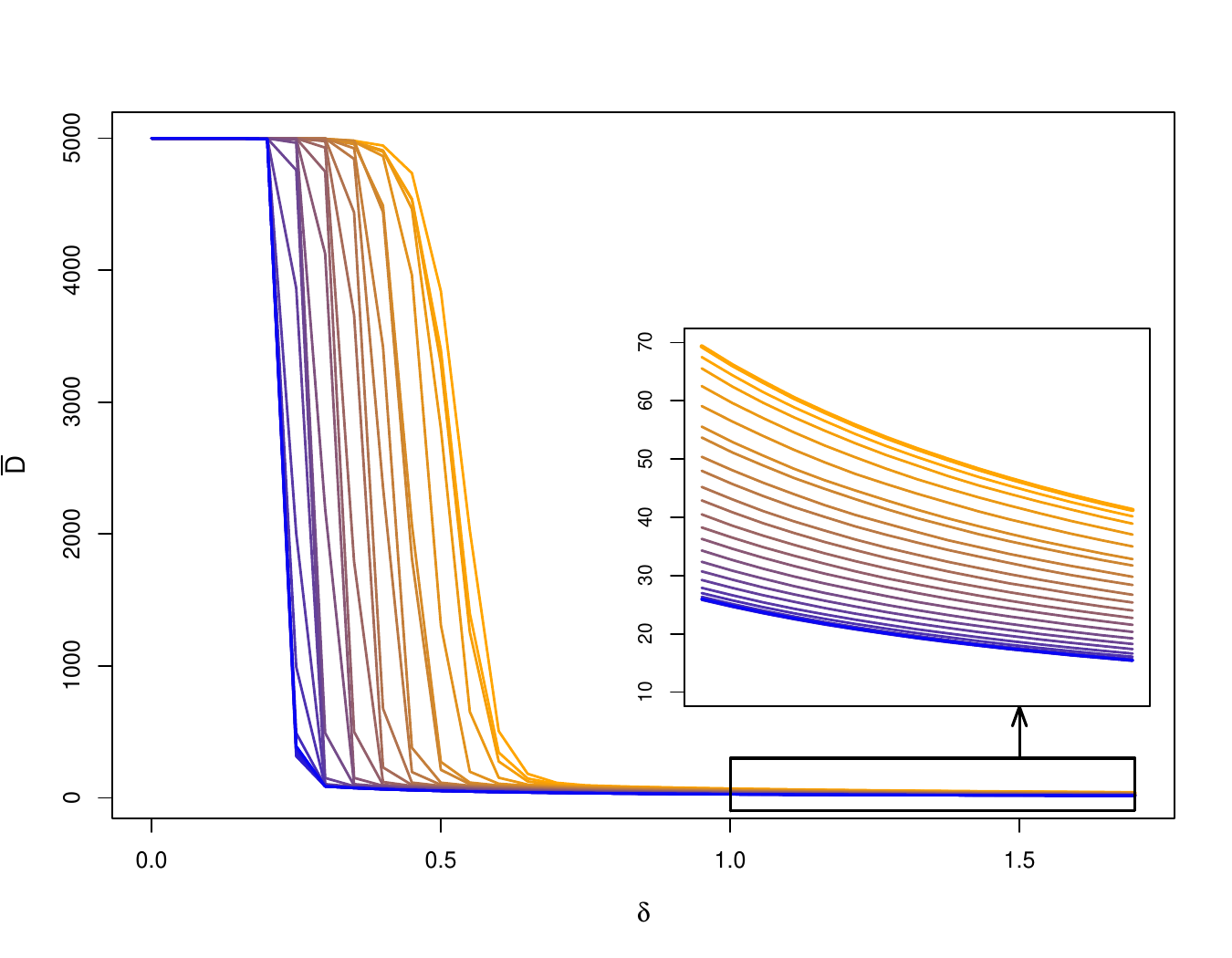}}
  \subfloat[average detection delay versus $\eta$.]{\includegraphics[width=0.49\textwidth]{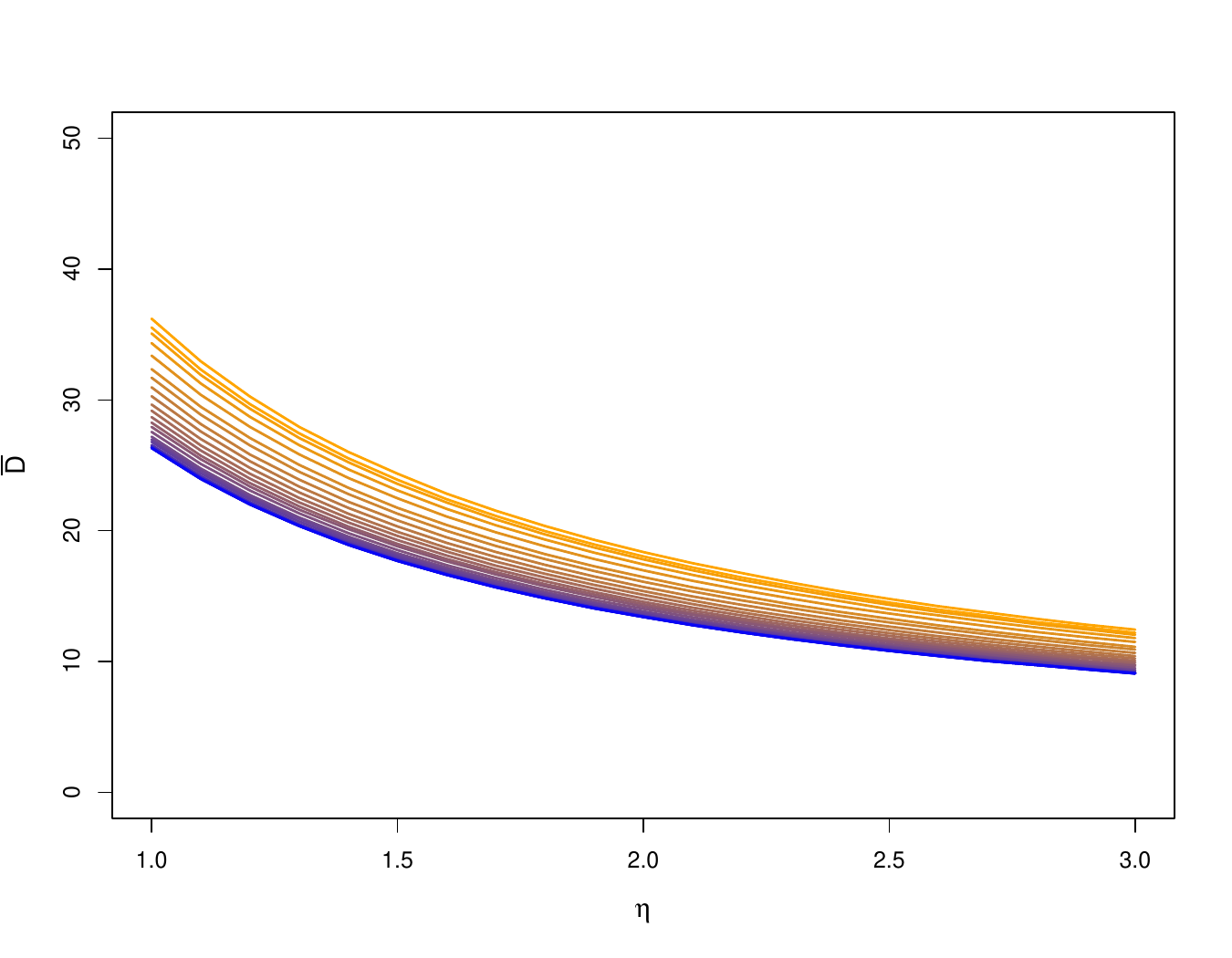}}
 \caption{The average number of messages transmitted to the centre (top) and average detection delay across varying mean shifts (bottom). Results are obtained when $m = 200$, $h = 100$, $T=10000, \tau = 5000, \alpha=0.05$. Each line corresponds to a different local threshold, which is labelled on the top right. The colour changes from orange to blue as the local thresholds increase from 0 to 5.2. When the local threshold is 5.2, the global threshold will be 0. So all possible combinations of thresholds are covered.} 
 \label{local_thresholds_mu}
\end{figure}

\noindent Figure \ref{local_thresholds_mu} gives the average detection delay and transmission cost for different values of local thresholds. There is a trade-off between communication savings and detection performance when choosing the local threshold. Larger local thresholds can reduce the transmission cost but will also lead to longer delays, especially when the change is small. However, with the increase in the mean shift, the detecting power of larger thresholds will close to that of small thresholds.\\

A centralized framework can be seen as an idealistic setting, which is equivalent to distributed setting when $c_{\textrm{local}}=0$. Compared with the idealistic setting, the distributed MOSUM can achieve similar performance when the size of the change is not small but also reduces massive transmission costs. But we will lose power in detecting small changes. We show the result below that distributed MOSUM can approximate the performance of idealistic setting overall by increasing the window size.

\subsection{The numerical dependency on parameters}
One advantage of using MOSUM statistics is that we do not need to specify the post-change mean. Instead, our proposed method requires specifying the window size $h$ and the training size $m$. In this section, we will investigate the impact of bandwidth and training size.
\subsubsection{The impact of bandwidth}\label{sec:bandwidth}
As shown in Figure \ref{5000a}, increasing the window size can increase the power of detecting small changes while leading to a slight delay in detecting large changes. Although increasing window size will increase the storage cost, it will not significantly increase the transmission cost as shown in Figure \ref{transbandwidth}. This drive us to think about whether we can improve the ability of distributed MOSUM with a large threshold to detect small changes by increasing the window size. Ideally, we would expect distributed MOSUM with increased window size can achieve similar performance as the idealistic setting.
\begin{figure}[H]%
 \centering
 \subfloat[$\bar D$ versus $\delta$ for $h=20,50,100$.]{\includegraphics[width=0.49\textwidth]{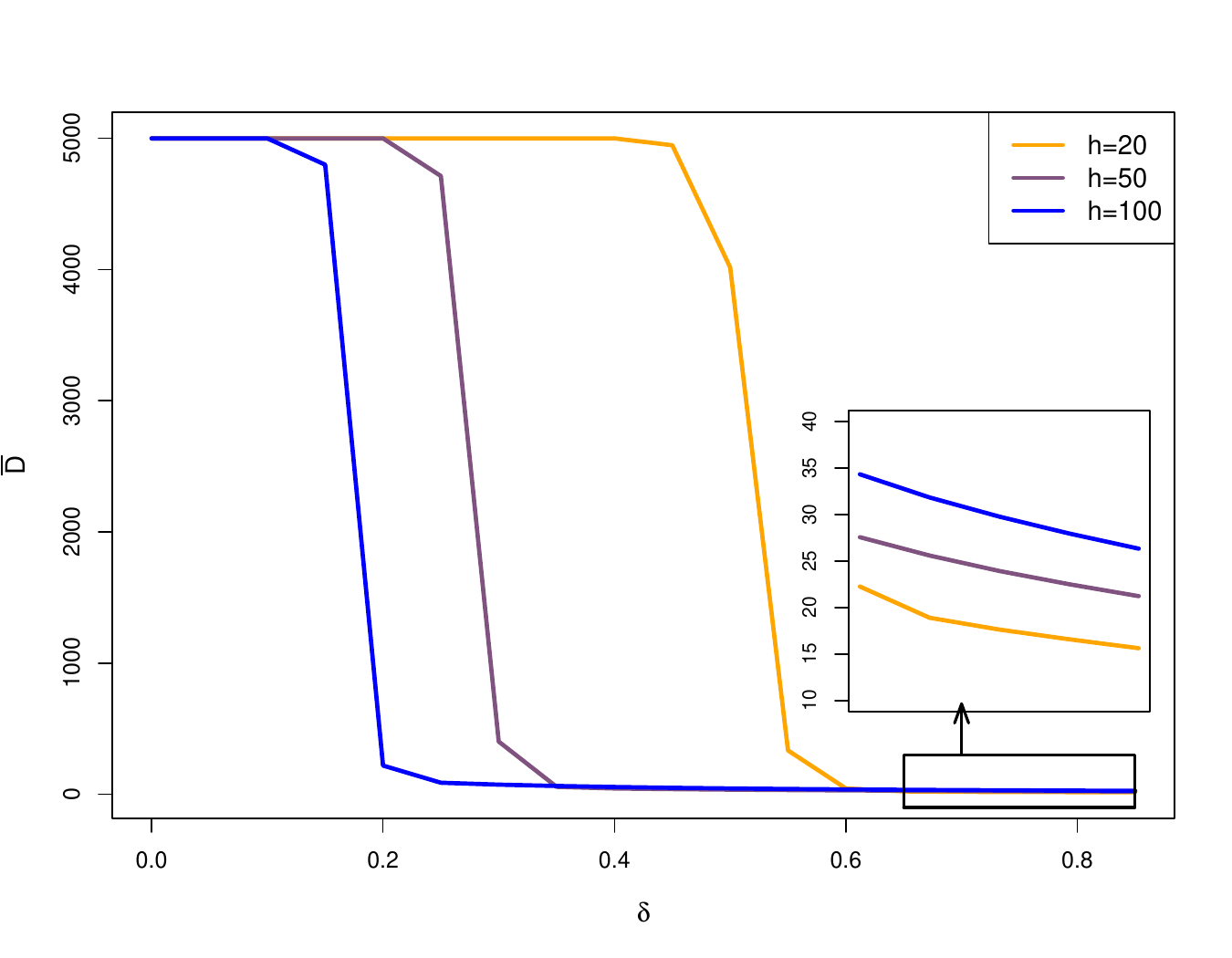}\label{5000a}}
 \subfloat[$\bar\Delta$ versus $h$.]{\includegraphics[width=0.49\textwidth]{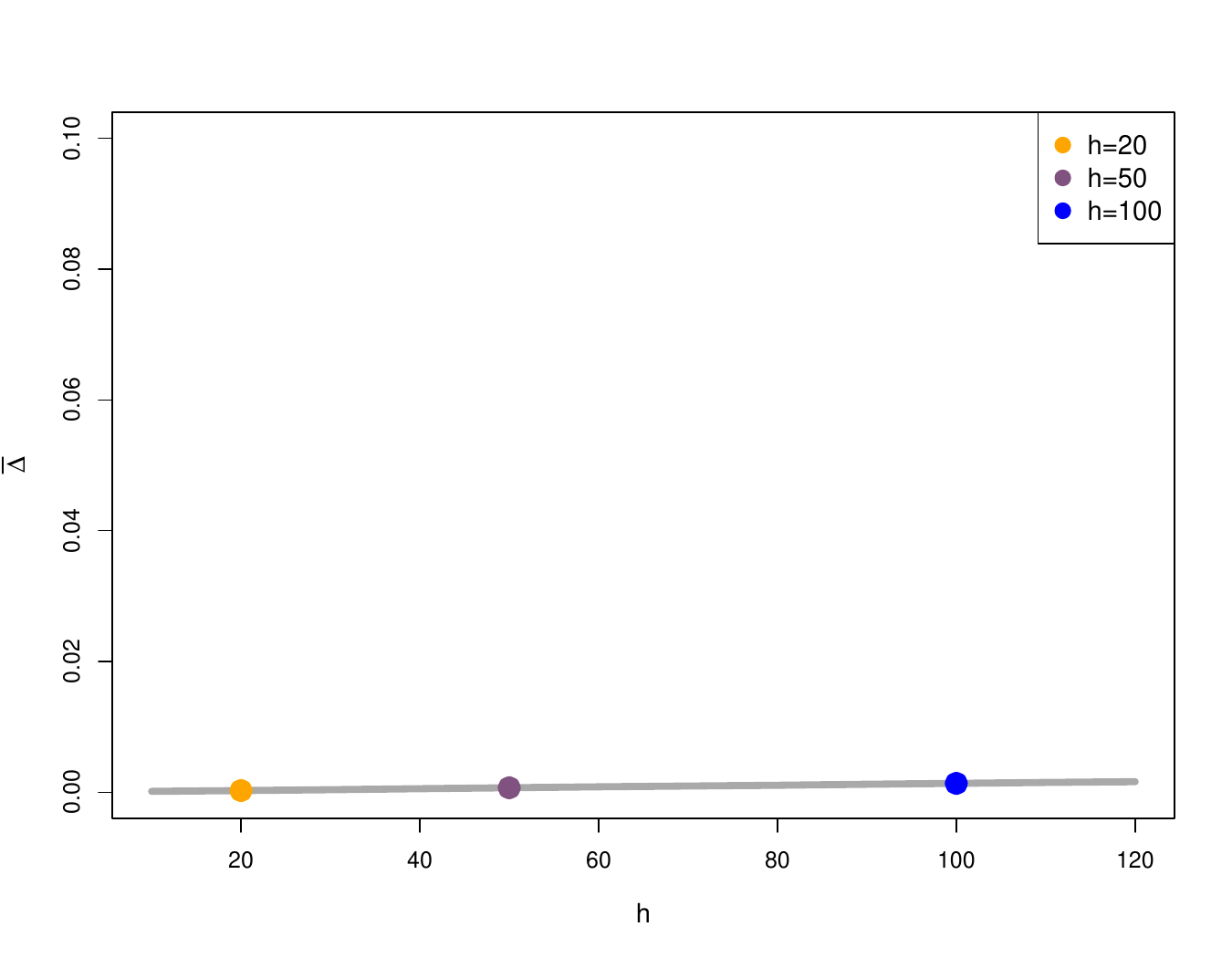}\label{transbandwidth}}%
 \caption{The influence of window size. Results are obtained over 1000 replications and take $m=200, d=100, T=10000, \tau=5000, \alpha=0.05, c_{\textrm{Local}}=3.44$.}
 \label{bandwidth_affect}%
\end{figure}
 \textbf{Recovering detectability}\\
 
For simplicity, we denote that the default window size for centralized MOSUM is $h^0$, and $h^*$ is the smallest window size that would allow distributed MOSUM to have similar performance as the idealistic setting. It is difficult to develop a neat theoretical formula between $h^*$ and $h^0$. But we can approximately find $h^*$ under alternatives by simulation. Our idea can be summarized as follows, and Figure \ref{increase_bandwidth} is the graphic explanation:
\begin{itemize}
    \item The behaviour of $\bar D$ will decrease dramatically when the mean shift is within a certain range (gray area). Therefore, we can find the median or mean $\delta$ of this certain range, denoted by $\delta^0$. Also, the corresponding ADD  $\bar D^0$ can be calculated. 
    \item Fix $\delta^0$, calculate $\bar D ^ {c_{\textrm{Local}}}(h)$ iteratively for distributed MOSUM, where $h\in[h^0,m]$. 
    \item The optimal window size $h^*=\arg \min \left\{\bar D ^ {c_{\textrm{Local}}}(h)-\bar D^0 (h^0) \right\}$. See blue arrow ($h^*$) is shorter than yellow arrow (h).
\end{itemize}
\begin{figure}[H]%
 \centering
 \includegraphics[width=0.49\textwidth]{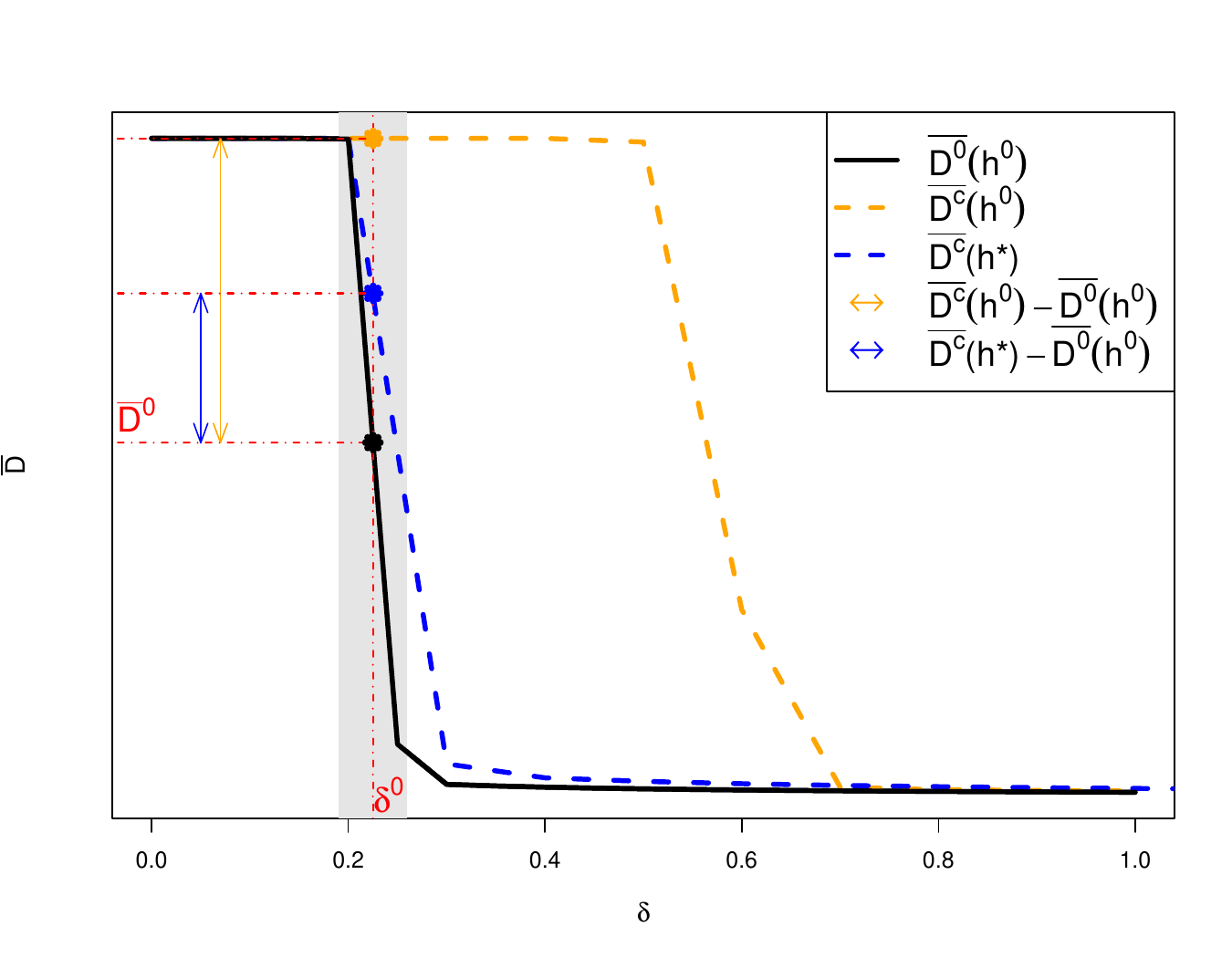}
 \caption{An graphic explanation of our proposed idea. Black line is the ADD for centralized MOSUM with window size $h$. Yellow dashed line is the ADD for distributed MOSUM with window size $h$; while blue dashed is the ADD for distributed MOSUM with window size $h*$.}
 \label{increase_bandwidth}%
\end{figure}
Figure \ref{increase_h_result} displays the simulation results that, for distributed regime, we can recover the same detectability of the centralized statistic by inflating $h$. \\
\begin{figure}[]%
 \centering
 \subfloat[$h^*$ found by our idea]{\includegraphics[width=0.49\textwidth]{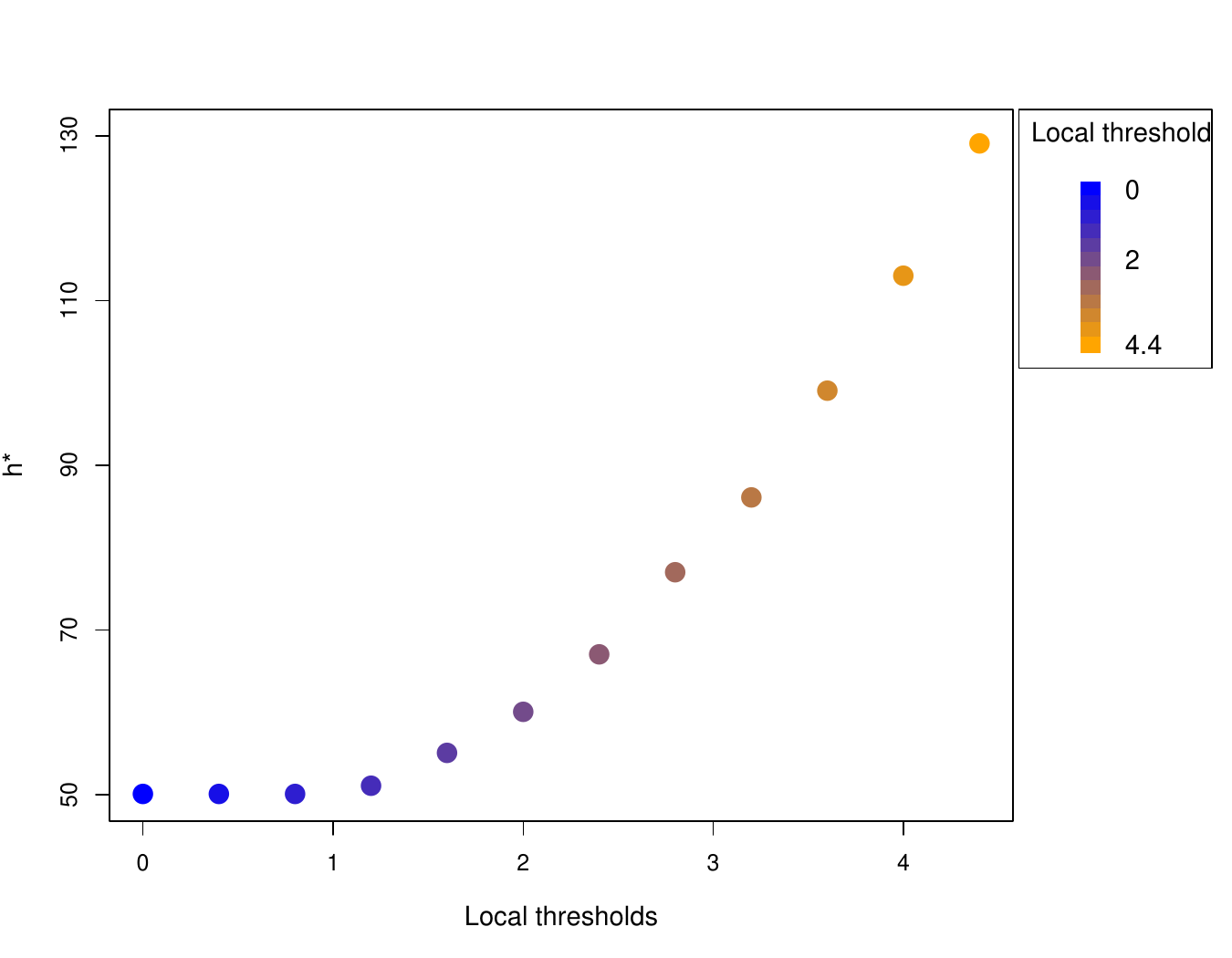}}
  \subfloat[average detection delay for distributed MOSUM (fixed $h=50$, gray line) and distributed MOSUM ($h=h^*$, colored line). Different lines corresponds to different $c_{\textrm{Local}}$]{\includegraphics[width=0.49\textwidth]{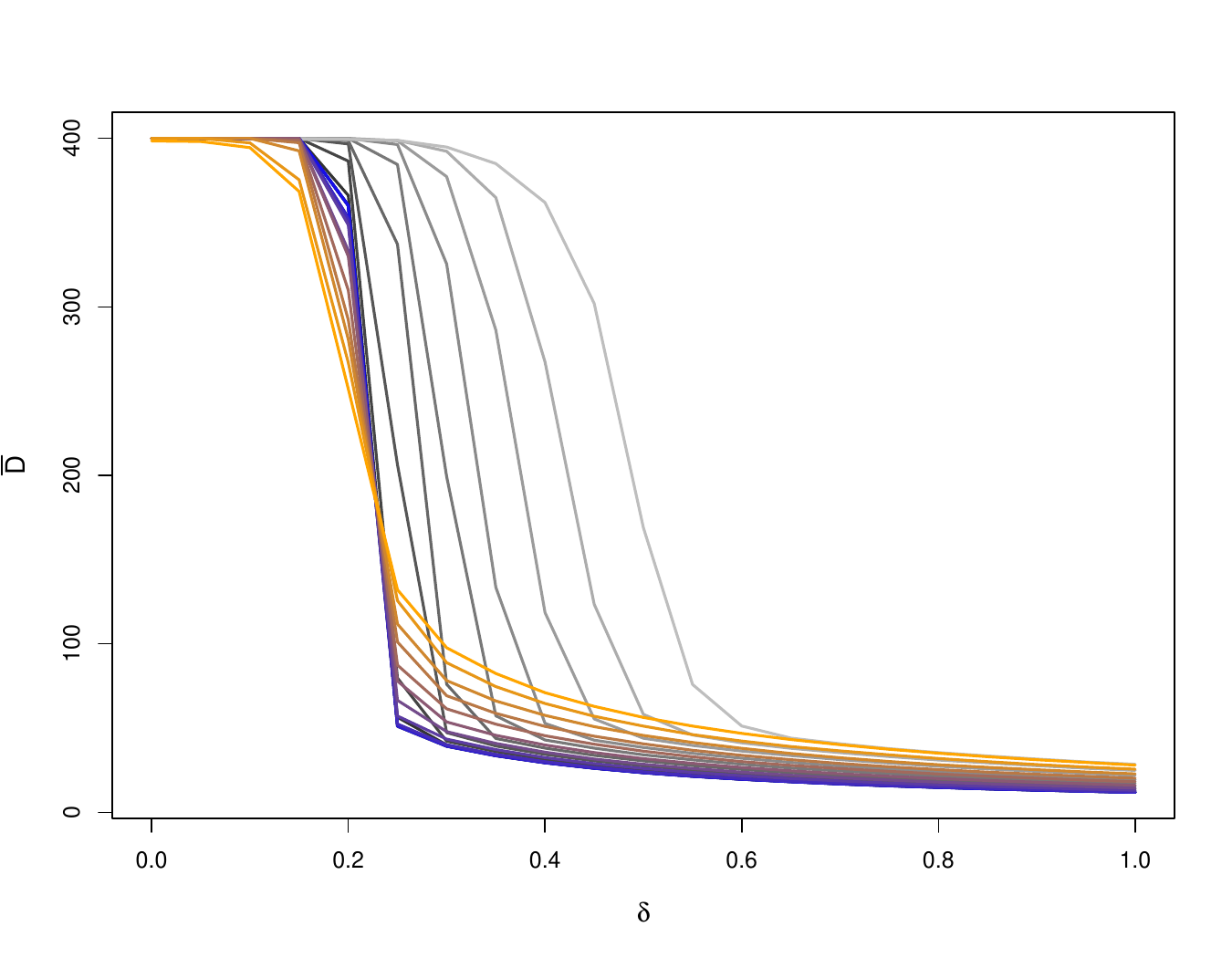}}
 \caption{An simple example showing that distributed MOSUM could approximate the detection power of centralized MOSUM by inflating window size. Results are obtained over 500 replications and take $m=200, d=100, T=1000, \tau=600, \alpha=0.05$, and $c_{\textrm{Local}}\in [0,4.4]$. When $c_{\textrm{Local}} = 4.4$, $c_{\textrm{Global}} = 0$. So all possible local thresholds are covered. For centralized setting, window size $h^0=50$.} 
 \label{increase_h_result}
\end{figure}
\subsubsection{The impact of the training dataset}
Fix the bandwidth $h$, the impact of the size of the training dataset can be investigated. Table \ref{tab:Training size} gives the thresholds, empirical size, and mean square errors (MSE) of estimated baseline parameters in our simulation. As we expected, the larger the training size is, the more accurate estimators are. Figure \ref{trainingsize} indicates that overall the detection powers of four different sizes of training datasets are similar. A larger training size could slightly increase the detection power when detecting small changes, which is attributed to more accurate estimators. Thus, in the real application, it is beneficial to choose a large-size training dataset because it is not expensive that can be done offline.
\begin{table}[]
\centering
\begin{tabular}{c|cccc}
\hline
             & $m=80$      & $m=100$       & $m=500$ & $m=1000$     \\ \hline
$c_{\textrm{Global}}$   & 9.039   &8.159     &6.014 & 5.708         \\ \hline
Empirical size & 0.007 & 0.007 & 0.009 & 0.006\\\hline
MSE for mean & 0.0125    & 0.01      & 0.002    &0.001\\\hline
MSE for sd   & 0.006    & 0.005     & 0.0001    & 0.0001\\ \hline
\end{tabular}
\caption{Empirical size, and MSE for estimated mean and standard deviation, results averaged over one thousand replications with $c_{\textrm{Local}}=3.44, h=50, T=6000$ and $\alpha=0.05$.}
\label{tab:Training size}
\end{table}
\begin{figure}[H]%
 \centering
 \includegraphics[width=0.5\textwidth]{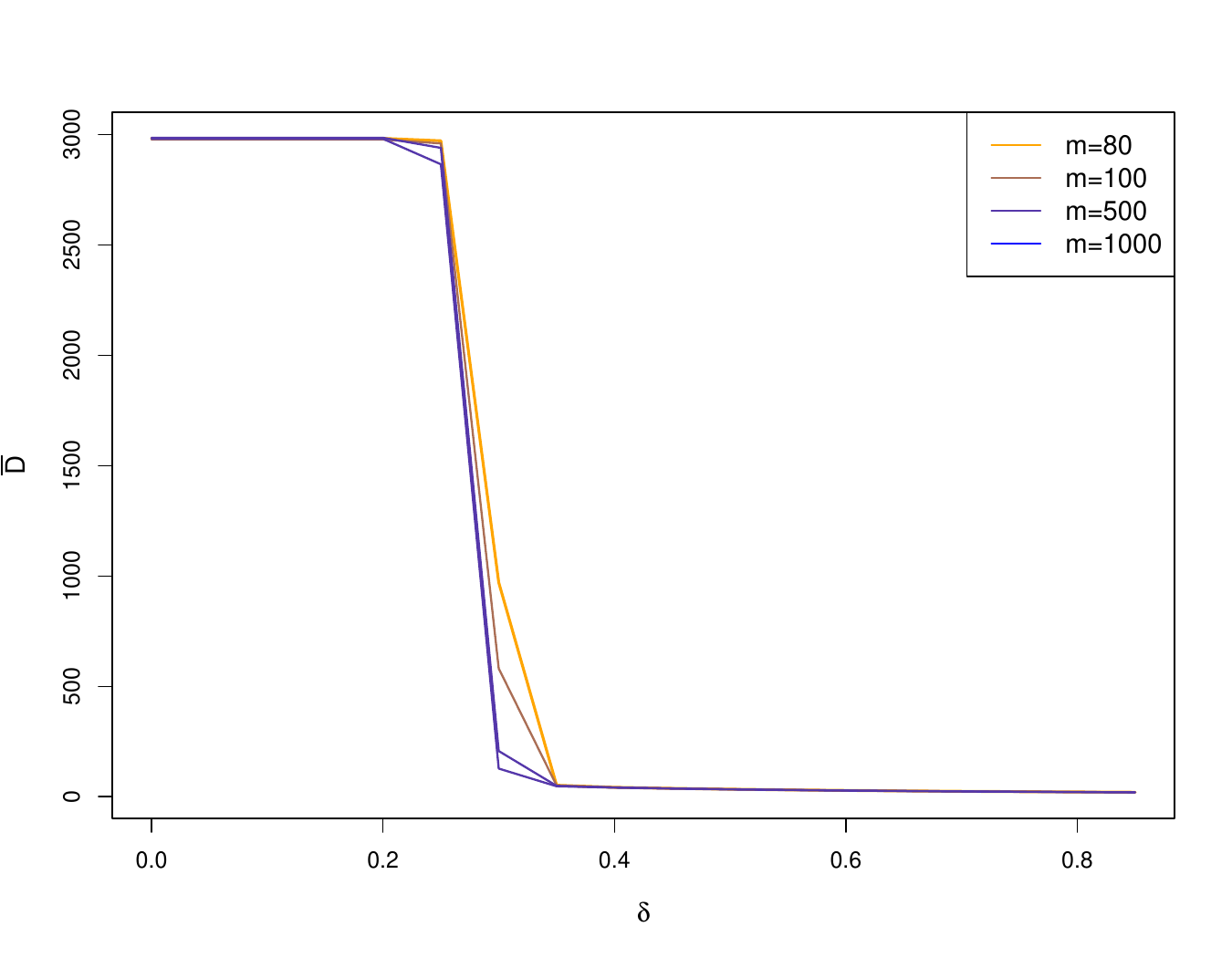}
 \caption{$\bar D$ versus $\delta$ when varing the size of training dataset. Result averaged over 500 replications with $\alpha=0.05, c_{\textrm{Local}}=3.44, T=6000$, $\tau=3000$ and $h=50$. The corresponding global thresholds are shown in Table \ref{tab:Training size}.}
 \label{trainingsize}%
\end{figure}
\subsection{The violation of the independence assumption}\label{sec:sim-ind}
Before, we assume that there is temporal independence among observations. However, it may not always hold in the real application. This section will investigate the performance when this assumption is violated. Here we measure our algorithm under AR(1) noise process, that is
\begin{align*}
    X_{i,t}= \delta_{i,t}\mathbbm{1}_{\left \{t>\tau \right \}} +\epsilon_{i,t},
\end{align*}
where $\epsilon_{i,t} = \phi \epsilon_{i,t-1}+v_t$ with $v_t \sim N(0,1)$. $|\phi|<1$ is used to measure the strength of the auto-correlation.\\

Auto-correlation will inflate the variance of data. There are two possible ways to handle this problem. The first one is to estimate the long-run variance as shown in Section \ref{kernel}. And one can also inflate the thresholds. We measure the false positives, average detection delay and the number of transmitted messages with fixed type-I-error of these two solutions under different scenarios. For better comparison, we also show the result of MOSUM without any adjustment. This will give us hints that to what extent our method fails to detect the change if we ignore the auto-correlation.\\

As Table \ref{autocorrelation} shows, our proposed method without adjustment can lose the ability to detect changes when introducing auto-correlation, that it fails to detect the change and always alarms. The performances of MOSUM with inflating thresholds are generally better than MOSUM with LRV since the former can detect the change in most scenarios. However, for those scenarios that the MOSUM with LRV can detect (usually $\delta$ is not small and $\phi$ is not large), it always has the lowest transmission cost and reasonable detection power. For example, when $p=100, \delta=1$, and $\phi=0.25$, both solutions have similar false positive rates and average detection delay, while MOSUM with LRV has lower transmission cost. It is surprising that estimating LRV has the lowest false positive rates and average detection delay when $\phi=0$ and $p=100/50$. This may be because it underestimates the variance.\\
 
However, when the auto-correlation is serious, it is not appropriate to apply our method to the raw data. Instead, it is more reasonable to apply our method after pre-processing the data, such as the residuals of AR models.
\begin{landscape}
\begin{table}[]\scalebox{0.8}{
\begin{tabular}{c|c|c|c|c|c|c|c|c|c|c|c|c|c|c|c|c}
\hline
        \multirow{3}{*}{$\phi$} & \multirow{3}{*}{Methods}  & \multicolumn{5}{c|}{$p=100$}  &\multicolumn{5}{c|}{$p=50$}  & \multicolumn{5}{c}{$p=10$}  \\ \cline{3-17}
        ~ & ~ &  ~ & \multicolumn{2}{c|}{$\delta=0.5$} &  \multicolumn{2}{c|}{$\delta=1$} & ~ &  \multicolumn{2}{c|}{$\delta=0.5$}&  \multicolumn{2}{c|}{$\delta=1$} & ~ &  \multicolumn{2}{c|}{$\delta=0.5$} &  \multicolumn{2}{c}{$\delta=1$}  \\ \cline{3-17}
        ~ & ~  & FP & D & Trans & D & Trans & FP & D & Trans & D & Trans & FP & D & Trans & D & Trans \\ \hline
        \multirow{3}{*}{0} & constant  & \blue{0.30} & \blue{90.67} & \blue{0.03} & \blue{45.75} & \blue{0.02} & \blue{0.29} & \blue{116.68} & \blue{0.03} & \blue{50.07} & \blue{0.03} & \blue{0.23} & 3767.69 & 0.18 & \blue{63.09} & \blue{0.03} \\ \cline{2-17}
         & inflating & \blue{0.30} & \blue{90.67} & \blue{0.03} & \blue{45.75} & \blue{0.02} & \blue{0.29} & \blue{116.68} & \blue{0.03} & \blue{50.07} & \blue{0.03} & \blue{0.23} & 3767.69 & 0.18 & \blue{63.09} & \blue{0.03} \\ \cline{2-17}
         & LRV  & \blue{0.19} & \blue{85.85} & \blue{0.05} & \blue{43.41} & \blue{0.05} & \blue{0.28} & \blue{105.74} & \blue{0.06} & \blue{48.42} & \blue{0.05} & \blue{0.32} & 4198.59 & 0.34 & \blue{64.32} & \blue{0.05} \\ \hline
        \multirow{3}{*}{0.25} & constant  & 63.49 & 79.82 & 0.19 & 40.36 & 0.18 & 62.04 & 92.62 & 0.19 & 45.90 & 0.18 & 61.95 & 2059.06 & 0.20 & 62.85 & 0.18 \\ \cline{2-17}
        & inflating  & \blue{0.33} & \blue{104.78} & \blue{0.22} & \blue{49.04} & \blue{0.20} & \blue{0.21} & 1413.09 & 0.76 & \blue{55.98} & \blue{0.20} & \blue{0.40} & 5000.00 & 0.48 & \blue{81.90} & \blue{0.21} \\ \cline{2-17}
         & LRV  & \blue{0.30} & 1294.64 & 0.21 & \blue{57.21} & \blue{0.05} & \blue{0.21} & 4228.80 & 0.30 & \blue{63.95} & \blue{0.05} & \blue{0.22} & 4995.26 & 0.09 & \blue{85.23} & \blue{0.05} \\ \hline
        \multirow{3}{*}{0.5} & constant  & 499.42 & 64.17 & 1.20 & 33.07 & 1.20 & 507.02 & 78.94 & 1.21 & 40.47 & 1.21 & 503.93 & 764.67 & 1.14 & 64.74 & 1.20 \\ \cline{2-17}
         & inflating  & \blue{0.57} & 4663.87 & 5.15 & \blue{60.51} & \blue{1.26} & \blue{0.34} & 5000.00 & 3.06 & \blue{73.66} & \blue{1.28} & 0.59 & 5000.00 & 1.10 & 5000.00 & 4.28 \\ \cline{2-17}
         & LRV & \blue{0.43} & 4996.68 & 0.09 & \blue{83.57} & \blue{0.05} & \blue{0.38} & 5000.00 & 0.06 & \blue{97.18} & \blue{0.06} & 0.59 & 5000.00 & 0.03 & 3878.04 & 0.35 \\ \hline
        \multirow{3}{*}{0.75} & constant  & 4219.42 & 5.74 & 8.35 & 3.99 & 8.35 & 4225.67 & 7.56 & 8.35 & 5.35 & 8.35 & 4228.56 & 8.25 & 8.37 & 6.94 & 8.37 \\ \cline{2-17}
        ~ & inflating  & \blue{0.76} & 5000.00 & 10.19 & \blue{498.83} & \blue{10.43} & 0.85 & 5000.00 & 7.84 & 5000.00 & 15.63 & 0.69 & 5000.00 & 6.02 & 5000.00 & 7.59 \\ \cline{2-17}
        ~ & LRV  & 0.64 & 5000.00 & 0.05 & 4980.28 & 0.25 & 0.59 & 5000.00 & 0.04 & 5000.00 & 0.14 & 0.78 & 5000.00 & 0.04 & 5000.00 & 0.06 \\ 
\hline
\end{tabular}}
\caption{Results are obtained over 1000 replications with $T=10000$, $m=200$, $h=100$, $\tau=5000$, $d=100, c_{\textrm{Local}}=3.44$, and $\alpha=0.05$ for all three methods. The blue colours are labelled when both the false positive rates and average detection delay are small.}
\label{autocorrelation}
\end{table}
\end{landscape}
\section{Conclusion}
\label{conclusion}
Within this paper, we proposed an online communication-efficient distributed changepoint detection method, and it can achieve similar performance as an idealistic setting but save many transmission costs. Numerically, we show that the local threshold and window size have an impact on the performance of our algorithm, and there is a trade-off in choosing a local threshold and window size. In application, we recommend choosing a large local threshold in general cases. But when the change is extremely small, the choice of the local threshold depends on the communication and storage budgets. If the communication budget is much more limited, choosing a large threshold with a large window size is sensible. If the storage cost is much more expensive, choosing a small threshold with small window size will approximately achieve the idealistic performance. \\

The violation of independent assumptions will negatively affect the power of our proposed method. We tried to solve this problem by inflating thresholds or estimating the long-run variance. Both ways can, to some extent, improve our algorithm when the auto-correlation problem is not severe. However, both approaches fail to detect changes in highly auto-correlated data. Therefore, one of the future research directions is how to detect change within highly auto-correlated data in real-time.\\

\textbf{Acknowledgements}\\
Yang gratefully acknowledges the financial support of the EPSRC via the STOR-i Centre for Doctoral Training (EP/S022252/1). This research was also supported by EPSRC grant EP/R004935/1 (Eckley) together with financial support from BT Research (Eckley, Fearnhead, Yang). The authors are also grateful to Lawrence Bardwell who played a key role in inspiring this work, and Dave Yearling (BT) for several helpful conversations that helped shape this research.
\clearpage

\clearpage
\bibliography{bibliog}
\clearpage
\appendix
\appendix

\section{Proofs}

\subsection{Proof of Theorem \ref{thm:global_null}}
\label{sec:global_null}

Squaring and expanding the weighted global MOSUM statistic in Equation \eqref{eq:global_MOSUM} for the two different cases gives
\begin{align}
    \label{eq:global_MOSUM_expansion}      
    \left( w(k,h)\mathcal{T}(m,k,h) \right)^2 =
    \begin{cases}
      \sum_{i=1}^{d} \left( w(k,h)\mathcal{T}_{i}(m,k,h) \right)^2   &\quad\text{dense case,} \\
      \sum_{i=1}^{d} \left( w(k,h)\mathcal{T}_{i}(m,k,h) \mathbbm{1}_{ \{ w(k,h)\mathcal{T}_{i}(m,k,h) > c_{\textrm{Local}} \} } \right)^2  &\quad\text{sparse case.}
    \end{cases}
\end{align}
From Theorem \ref{thm:local} with $k = ht$, the weighted local MOSUM and its hard-thresholded counterpart have limit
\begin{align}
  \label{eq:limits_null}
  \begin{split}    
  \lim_{m \rightarrow \infty} \left( w(k,h)\mathcal{T}_i(m,k,h) \right)^2 &\overset{\mathcal{D}}{\longrightarrow} \rho(t)^2Z_i(t)^2, \\  
  \lim_{m \rightarrow \infty} \left( w(k,h)\mathcal{T}_i(m,k,h) \mathbbm{1}_{ \{ w(k,h)\mathcal{T}_{i}(m,k,h) > c_{\textrm{Local}} \} } \right)^2 &\overset{\mathcal{D}}{\longrightarrow} \rho(t)^2Z_i(t)^2\mathbbm{1}_{ \{  \rho(t)Z_{i}(t) > c_{\textrm{Local}} \} },
\end{split}
\end{align}                                                                                                                                                    
where $Z_i(t)$ is defined in Equation \ref{brownian}. Taking the limit of \eqref{eq:global_MOSUM_expansion} as $m \rightarrow \infty$ gives the result.

\subsection{Proof of Theorem \ref{thm:global_alt}}
\label{sec:global_alt}

It is enough to show that
\begin{align*}
  \left( w( \tilde{k}, h) \mathcal{T}( m,\tilde{k},h) \right)^2 \overset{\mathcal{P}}{\longrightarrow} \infty,
\end{align*}
for a time $\tilde{k}$ later than the change point $k^{*}$ but before the end of the monitoring time $\lfloor m\Tilde{T} \rfloor$.

Since $k^{*} \leq \lfloor h \nu \rfloor$, we can choose $\tilde{k} = \lfloor x_0h \rfloor + h$ where $\nu < x_0 < \Tilde{T}\frac{m}{h} - 1$ so that $k^{*} \leq \tilde{k} - h$.

If data stream $i$ is affected by the change, so that $i \in \mathcal{S}$ and $\delta_i \ne 0$ then
\begin{align*}
  \frac{1}{h}\mathcal{T}_{i}(m, \tilde{k}, h) &= \frac{1}{h \hat{\sigma}_i} \left|  \sum_{t = m + \lfloor x_0 h \rfloor + 1}^{ m + \lfloor x_0 h \rfloor + h } \left( X_{i,t} - \hat{\mu}_i \right)\right|  \\
                                              &= \frac{1}{h \hat{\sigma}_i} \left|  \sum_{t = m + \lfloor x_0 h \rfloor + 1}^{ m + \lfloor x_0 h \rfloor + h } \left( \mu_i + \delta_i + \epsilon_{i,t} - \hat{\mu}_i \right)\right|  \\
  &= \frac{1}{h \hat{\sigma}_i} \left| h(\mu_i - \hat{\mu}_i) + h\delta_i + \sum_{t = m + \lfloor x_0 h \rfloor + 1}^{ m + \lfloor x_0 h \rfloor + h }\epsilon_{i,t}  \right|  \\
                                              &= \frac{1}{\hat{\sigma}_i} \left| \mu_i - \hat{\mu}_i + \delta_i + \frac{1}{h}\sum_{t = m + \lfloor x_0 h \rfloor + 1}^{ m + \lfloor x_0 h \rfloor + h }\epsilon_{i,t} \right|   \\
  &= \frac{1}{\hat{\sigma}_i} \left| \delta_i + o_{P}(1) \right|. 
\end{align*}
On the other hand if $i \notin \mathcal{S}$,
\begin{align*}
  \frac{1}{h}\mathcal{T}_{i}(m, \tilde{k}, h) &= \frac{1}{\hat{\sigma}_i} \left| o_{P}(1)  \right|.
\end{align*}

For the global dense procedure
\begin{align*}
  \left( w( \tilde{k}, h) \mathcal{T}( m,\tilde{k},h) \right)^2 &= w(\tilde{k}, h)^2 \sum_{i=1}^{d} \mathcal{T}_i(m, \tilde{k}, h)^2 \\
  &= \left( \frac{1}{\sqrt{h}} \rho \left( \frac{\tilde{k}}{h} \right) \right)^2 \times h^2 \times \sum_{i=1}^{d} \left( \frac{1}{h} \mathcal{T}_i(m, \tilde{k}, h) \right)^2 \\
                                                                &= h \rho \left( \frac{\tilde{k}}{h} \right)^2 \sum_{i=1}^{d} \left( \frac{1}{h} \mathcal{T}_i(m, \tilde{k}, h) \right)^2 \\
  &= h \rho \left( x_0 + 1 + o(1) \right)^2 \sum_{i \in \mathcal{S}} \left( \frac{\delta_i}{\hat{\sigma}_i} \right)^2 + o_{P}(1) \overset{\mathcal{P}}{\longrightarrow} \infty,
\end{align*}
as $m,h \rightarrow \infty$.

For the global sparse procedure the local MOSUM's $w( \tilde{k}, h) \mathcal{T}( m,\tilde{k},h)$ are hard thresholded. Since these diverge to infinity individually then the same argument used for the dense procedure applies.

\end{document}